\documentclass[11pt]{article}
\usepackage{amsmath,amssymb,bm,amsthm,array,tikz,enumerate}
\usepackage{url}
\usepackage{lscape}
\usepackage[margin=30truemm]{geometry}
\usepackage{float}
\usepackage{makecell}
\usepackage{blkarray}
\usepackage{multirow}

\usetikzlibrary{positioning,graphs}

\numberwithin{equation}{section}

\DeclareMathOperator{\Spec}{Spec}
\DeclareMathOperator{\rank}{rank}
\DeclareMathOperator{\tr}{tr}
\DeclareMathOperator{\diag}{diag}

\DeclareMathOperator{\Aut}{Aut}

\newcommand{\MC}[1]{\mathcal{#1}}
\newcommand{\MB}[1]{\mathbb{#1}}

\newcommand{\BM}[1]{{\bm #1}}

\newtheorem{thm}{Theorem}[section]
\newtheorem{lem}[thm]{Lemma}

\newtheorem{cor}[thm]{Corollary}

\theoremstyle{definition}

\allowdisplaybreaks

\title
{
Entanglement entropy in two-particle Grover walks on graphs \\
}

\author{
Sho Kubota\thanks{
Corresponding author.
Department of Mathematics Education,
Aichi University of Education,
1 Hirosawa, Igaya-cho, Kariya, Aichi 448-8542, Japan.
\texttt{skubota@auecc.aichi-edu.ac.jp}}
\and
Haruhiko Matsubara\thanks{
Graduate School of Environment Information Sciences,
Yokohama National University,
Hodogaya, Yokohama 240-8501, Japan.
}
\and
Etsuo Segawa\thanks{
Graduate School of Environment Information Sciences,
Yokohama National University,
Hodogaya, Yokohama 240-8501, Japan.
}
}
\date{}

\begin{document}
\maketitle

\begin{abstract}
We define a two-particle quantum walk of identical particles on a graph $G$
via the one-particle Grover walk on the Kronecker product $G \otimes G$,
and call it the two-particle Grover walk.
In systems of identical particles,
quantum mechanics requires that quantum states have a certain invariance with respect to the exchange of particles.
Focusing on the symmetry of the Kronecker product $G \otimes G$ as a graph,
we show that the time evolution operator of this walk commutes with the swap operator,
which ensures that this requirement is satisfied.
Furthermore, we study the entanglement entropy of quantum states evolved by this walk.
For the complete bipartite graph $K_{n,n}$,
we completely determine the values of $n$ for which the quantum states evolved from specific initial states attain the upper bound of the entropy at some time,
and prove that they are exactly $1$ and $2$.
\vspace{8pt} \\
{\it Keywords:} Grover walk, entanglement entropy, Kronecker product graph  \\
{\it MSC 2020 subject classifications:} 05C50; 81Q99
\end{abstract}

\section{Introduction}

Quantum walks are concepts that incorporate quantum-mechanical features into classical random walks.
Although they originate from stochastic processes,
they are now also studied as important objects in quantum information science, including quantum algorithms~\cite{portugal2013quantum} and quantum simulation~\cite{qiang2024quantum}.
It is also known that discrete-time quantum walks can realize universal quantum computation~\cite{lovett2010universal},
and the study of quantum walks is related to the theoretical understanding of quantum information and quantum computation.

Many studies of quantum walks deal with one-particle models.
In such models, the state of a single particle is represented as an element of a Hilbert space,
and its time evolution is studied.
On the other hand,
the model treated in this paper is a discrete-time two-particle model.
In this model, the state of two particles is represented as an element of the tensor product of the Hilbert spaces describing the individual particles, and its time evolution is considered.
In two-particle systems, it is necessary to consider issues that do not arise in one-particle systems,
such as the distinguishability of particles and the related exchange symmetry.
At the same time,
phenomena that do not arise in one-particle systems also appear. 
The most representative example is the entanglement between particles.
Although this paper focuses on the two-particle case on arbitrary connected graphs, quantum walks with an arbitrary number of particles have also been studied~\cite{chandrashekar2012quantum, goyal2010spatial} on one-dimensional lattices.

Typical research themes in two-particle quantum walks on one-dimensional lattices include how interactions between particles affect the asymptotic behavior for large time steps and how entanglement measures evolve over time~\cite{ahlbrecht2012molecular,  carson2015entanglement,omar2006quantum}.
On the other hand, on finite graphs, two-particle quantum walks driven by a temporally random unitary time evolution with respect to the connection to each edge on the cycle and line have also been studied,
and characterized the attractor space where the two walkers are absorbed in the long time limit~\cite{paryzkova2024two}.
Furthermore, there are also studies applying two-particle quantum walks to the graph isomorphism problem~\cite{berry2011two, gamble2010two, rudinger2013comparing}.

In this paper, we treat discrete-time two-particle quantum walks of identical particles on graphs.
One of the major differences from previous studies is the treatment of interactions.
A typical way to introduce interactions in two-particle quantum walks is to add local interactions when the particles are at the same position or sufficiently close to each other.
On the other hand, our proposed two-particle quantum walk is based on a natural isomorphism between the ``two-particle Hilbert space on a graph $G$" and ``the one-particle Hilbert space on the Kronecker product graph $G \otimes G$".
Through this isomorphism, we will show that
a one-particle quantum walk on $G\otimes G$ reproduces a two-particle quantum walk model on $G$ with some interaction characterized by quantum coins assigned at each vertex of $G\otimes G$~(see Lemma~\ref{0422-1}).
Then in this paper, we assign the Grover matrix at each vertex of $G\otimes G$.
We call this walk the {\it two-particle Grover walk on $G$}.
Note that the two-particle Grover walk on $G$ coincides with the one-particle Grover walk on $G\otimes G$.
The detailed definition is given in Section~\ref{0527-1}. 
This construction yields a time evolution having global interactions between two particles that differs both from the time evolutions of models with local interactions and from the non-interacting time evolution obtained by taking the tensor product of two one-particle time evolutions.

In addition, when considering two-particle systems of identical particles,
we must ensure that the time evolution operator satisfies the requirement of quantum mechanics that it preserves symmetry with respect to the exchange of particles~\cite{sakurai2020modern}.
This implies that the time evolution operator must commute with the swap operator exchanging the locations of the two particles.
To satisfy this requirement,
we focus on the symmetry of $G \otimes G$ as a graph and prove this commutativity by using methods from algebraic graph theory.
Although the definition of our two-particle quantum walk is given in Section~\ref{0527-1},
we state this result here as the first main theorem of this paper.

\begin{thm}\label{thm:main1}
Let $\MC{U}$ be the time evolution operator of the two-particle Grover walk on a graph $G$,
and let $P$ be the swap operator.
Then, we have $\MC{U}P = P\MC{U}$.
\end{thm}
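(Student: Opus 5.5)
The plan is to realize $P$ as the permutation operator induced by a graph automorphism of $G\otimes G$, and then to use the general fact that the Grover walk on any graph commutes with the unitaries induced by its automorphisms.

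\textbf{Setting up the two-particle space.} Through the isomorphism described above, the two-particle Hilbert space on $G$ is identified with $\ell^2(\mathcal{A}(G\otimes G))$. An arc of $G\otimes G$ has the form $((x,y),(x',y'))$ with $(x,x'),(y,y')\in\mathcal{A}(G)$, so it corresponds to the pair of arcs $((x,x'),(y,y'))$ of $G$. Under this identification, the swap operator $P$, which exchanges the two particles, acts on basis vectors by
\[
P\,\delta_{((x,y),(x',y'))}=\delta_{((y,x),(y',x'))}.
\]

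\textbf{The swap as an automorphism.} Define $\sigma: V(G)\times V(G)\to V(G)\times V(G)$ by $\sigma(x,y)=(y,x)$. The map $\sigma$ is a bijection. By the definition of the Kronecker product, $(x,y)\sim(x',y')$ in $G\otimes G$ if and only if $x\sim x'$ and $y\sim y'$ in $G$, and this condition is symmetric under exchanging the two coordinates. Hence $\sigma\in\Aut(G\otimes G)$, and $\sigma$ induces a permutation $\hat\sigma(a)=(\sigma(t(a)),\sigma(o(a)))$ of the arcs; here I write arcs as $a=(t(a),o(a))$, with the head first, matching $(x,x')$ above. The corresponding permutation matrix is exactly $P$.

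\textbf{Automorphisms commute with the Grover walk.} I expect this step to carry the main content, and I will treat it in general for a graph $H$ (here $H=G\otimes G$) with automorphism $\sigma$. Write $U=SC$, where $S$ is the shift $\delta_a\mapsto\delta_{a^{-1}}$ and $C=\bigoplus_v(2J_{d(v)}/d(v)-I)$ is the Grover coin. I will use whatever precise form the paper fixes in Section~\ref{0527-1}; an equivalent formulation is $U=S(2d_A^*d_A-I)$ with $d_A$ the normalized boundary map.

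Since $\hat\sigma(a^{-1})=\hat\sigma(a)^{-1}$, we have $PS=SP$. For the coin, $\hat\sigma$ maps the arcs with terminus $v$ bijectively onto the arcs with terminus $\sigma(v)$, and $\deg\sigma(v)=\deg v$. Therefore
\[
P\,d_A^*d_A\,P^{-1}=d_A^*d_A,
\]
because $d_A^*d_A$ is the orthogonal projection onto the span of the vectors $\frac{1}{\sqrt{\deg v}}\sum_{t(a)=v}\delta_a$, and $P$ permutes these vectors among themselves. Alternatively, one can check the matrix entries directly: $(C)_{a,b}=\frac{2}{\deg v}-\delta_{ab}$ when $t(a)=t(b)=v$ and $0$ otherwise, and this expression is invariant under $(a,b)\mapsto(\hat\sigma a,\hat\sigma b)$.

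\textbf{Conclusion.} Combining the two commutation relations gives $P\MC{U}=PSC=SPC=SCP=\MC{U}P$.

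The main obstacle I anticipate is not conceptual but one of bookkeeping. I need to confirm that the swap operator as the paper defines it on the tensor product of one-particle arc spaces really coincides with $\hat\sigma$ under the isomorphism used in Lemma~\ref{0422-1}. In particular, the ordering conventions for arcs and for tensor factors must match, so that no hidden transpose or extra reversal appears.
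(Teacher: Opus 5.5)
Your proposal is correct and follows the paper's proof. The paper uses the same swap automorphism $g(x,y)=(y,x)$ of $G\otimes G$ and verifies $N_{\tilde{g}}=RPR^{-1}$, which is exactly the bookkeeping you flagged; your formula for $P$ already settles it, with no hidden transpose. The paper then cites the commutation $UN_{\tilde{g}}=N_{\tilde{g}}U$ from Kubota--Yoshino, whereas you prove it directly from the factorization $U=SC$, which is a valid, self-contained replacement for that citation.
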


Therefore, if an underlying graph $G$ is determined, then the two-particle Grover walk, that satisfies the indistinguishability, is uniquely determined. 
Now, we focus on the entanglement entropy of states in the two-particle Grover walks as a feature that can be obtained precisely because this is a multi-particle system.
The entanglement entropy is an indicator of how the two particles are entangled.
Highly entangled quantum states, and in particular maximally entangled states, may serve as useful resources in quantum information processing, such as quantum teleportation and superdense coding.
It is well known that the entanglement entropy for a two-particle state in the system size $N$ has the upper bound $\log N$, in general~\cite{nielsen2010quantum}.
Then, in our case, the entanglement entropy of the two-particle Grover walk on the graph $G=(V,E)$ does not exceed  $\log (2|E|)$ at each time step. 
We note that if we had chosen the tensor product of one-particle time evolutions as the time evolution, then the entanglement entropy would be invariant to the time evolution, which means that 
the entanglement of this system is that of the initial state. 
We are interested in how large entanglement is generated from a small entanglement through the time evolution of the two-particle Grover walk on $G$.
In this paper, we set the initial state by choosing an edge $e\in E$ inducing the two symmetric arcs $a$ and $a^{-1}$ such that
\[
\psi_0^{(G,e)} = \frac{|a\rangle \otimes |a^{-1}\rangle \pm |a^{-1}\rangle \otimes |a\rangle}{\sqrt{2}},
\]
where the sign is either $+$ (boson) or $-$ (fermion).
Here $|z\rangle$ denotes the standard basis vector $\BM{e}_z$ corresponding to an arc $z$.
The entanglement entropy of $\psi_0^{(G,e)}$ is $\log 2$,
which is independent of the system size. 
Let $\psi_t^{(G,e)}$ be the $t$-th iteration of the two-particle Grover walk on $G$ with the above initial state.
Natural questions arise that  
\begin{enumerate}
\item 
Is there a graph $G=(V,E)$ where there are a time $t\geq 0$ and an edge $e\in E$ such that the entanglement entropy of $\psi_t^{(G,e)}$ attains the upper bound $\log (2|E|)$?
\item 
If it exists, what is the family of such graphs ? 
\end{enumerate}
In this paper, for complete bipartite graphs $K_{n,n}$ as a first trial to tackle these problems,
we determine the values of $n$ for which the entanglement entropy attains its upper bound at some time.

\begin{thm}\label{thm:main2}
For the two-particle Grover walk on $K_{n,n}$ starting from the initial states defined in~\eqref{0507-4},
the entanglement entropy attains its upper bound at some time if and only if $n = 1,2$.
Moreover, when $n=1$, the entanglement entropy attains its upper bound at every time, while when $n=2$, it attains its upper bound at time $\tau$ if and only if $\tau \equiv 2 \pmod{4}$.
\end{thm}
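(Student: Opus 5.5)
The proof will be a spectral computation of the reduced density matrix. The entanglement entropy of a pure two-particle state $\psi \in \MC{H}\otimes\MC{H}$, with $\dim\MC{H}=2|E|=2n^2$, is the Shannon entropy of the squared Schmidt coefficients. Equivalently, write $\psi=\sum_{x,y}M_{x,y}|x\rangle\otimes|y\rangle$. The entropy is then the von Neumann entropy of $\rho=MM^*$, and it attains $\log(2|E|)$ exactly when $MM^*=\frac{1}{2|E|}I$, i.e.\ when $\sqrt{2|E|}\,M$ is unitary.

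The plan is therefore to compute the coefficient matrix $M_t$ of $\psi_t^{(K_{n,n},e)}$ and decide when it is a scalar multiple of a unitary. I would use the isomorphism of Lemma~\ref{0422-1}, which identifies the two-particle Grover walk on $G$ with the one-particle Grover walk on $G\otimes G$. Under this identification the arc $(a,b)$ of $G\otimes G$ corresponds to $|a\rangle\otimes|b\rangle$. The Grover walk $\MC{U}=SC$ on $G\otimes G$ then acts on $M$ in two steps:
\begin{itemize}
\item the shift reverses both arcs, $M\mapsto RMR^{\top}$ with $R$ the arc-reversal permutation;
\item the coin at vertex $(u,v)$ of $G\otimes G$ is the Grover matrix on the $\deg u\cdot\deg v$ arcs leaving $(u,v)$, i.e.\ the blocks of $M$ indexed by arcs out of $u$ and out of $v$.
\end{itemize}
Since the coin is $\frac{2}{\deg u\deg v}J-I$ on each block, the action is $M\mapsto 2\,D^{*}(\text{block sums})D - M$, where $D$ is the boundary/incidence map. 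So the evolution is an affine-like recursion on $M$ built from block-averaging.

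For $K_{n,n}$ the next step is to reduce to a small invariant subspace. I would exploit the automorphism group stabilizing the edge $e=\{u_0,v_0\}$, namely $(S_{n-1}\times S_{n-1})\rtimes \MB{Z}_2$ acting diagonally on both particles. Under this group the orbits of pairs of arcs are finitely many, and their number is independent of $n$. These orbits are classified by
\begin{itemize}
\item whether each arc's endpoints equal $u_0$, $v_0$, or not;
\item the orientation of each arc;
\item coincidences between the two arcs.
\end{itemize}
Then $M_t$ is constant on orbits and evolves by a fixed-size matrix with entries rational in $n$. From this I would obtain closed forms for $M_t$, presumably periodic or eigenvalue-expansion expressions in $t$. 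Theorem~\ref{thm:main1} guarantees that the boson/fermion symmetry $M^{\top}=\pm M$ is preserved, which cuts the number of orbit variables roughly in half.

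For the necessity direction ($n\ge3$ fails), full entropy requires every row of $M_t$ to have squared norm exactly $1/(2n^2)$, i.e.\ $\rho$ has constant diagonal. It suffices to compare two rows, for instance the row of arc $a$ and the row of an arc far from $e$. The orbit reduction gives the diagonal of $\rho_t$ as explicit functions of $n$ and $t$. Alternatively, rank considerations may suffice: the walk moves through $\MC{U}$, so $M_t$ lies in a low-dimensional span, and $\rank M_t$ may be bounded by a constant $c<2n^2$ for large $n$. Full entropy needs rank $2n^2$, which would settle all $n$ beyond a small threshold; the remaining small $n$ (say $n=3,4$) would be checked from the closed forms.

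For sufficiency, the cases $n=1$ and $n=2$ are direct computations. For $K_{1,1}$ the space is $2$-dimensional and the state stays maximally entangled for all $t$. For $K_{2,2}=C_4$, with $8$ arcs, one computes $M_t$ and finds that $M_t$ is a scaled unitary exactly when $t\equiv2\pmod 4$.

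The main obstacle I expect is the necessity direction for general $n$: getting closed forms for the orbit recursion and proving that the reduced density matrix is never scalar for $n\ge3$ at any $t$. I would first try the rank bound, since it avoids computing the dynamics in detail.
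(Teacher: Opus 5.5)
Your reformulation (maximal entropy iff $\Psi_\tau\Psi_\tau^*=\frac{1}{2n^2}I$) agrees with the paper, and so does your direct treatment of $n=1$ and of $n=2$, $\tau=2$. But the necessity direction has a genuine gap, and so does the claim that for $n=2$ the bound is attained \emph{only} when $\tau\equiv 2\pmod 4$: nothing in your plan controls all $\tau\ge 0$. The missing idea is periodicity. By Lemma~\ref{0505-1}, $K_{n,n}\otimes K_{n,n}\simeq K_{n^2,n^2}\cup K_{n^2,n^2}$. The one-particle Grover walk on a complete bipartite graph with at least three vertices satisfies $U^4=I$, so $\MC{U}^4=I$ for $n\ge 2$, and only $\tau\in\{0,1,2,3\}$ matter. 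Moreover $\psi_3^{\pm}=\MC{U}^{-1}\psi_0^{\pm}$ is obtained in one step from $U^*$. The paper then needs no orbit recursion. It evaluates the single entry $(\Psi_\tau^{\pm}(\Psi_\tau^{\pm})^*)_{(x_1,y_1),(x_1,y_1)}$, which equals $\frac{n^3-4n+4}{2n^3}$ for $\tau=1,3$ and $\frac{(n^3-4n+4)^2}{2n^6}$ for $\tau=2$, and compares it with the known target $\frac{1}{2n^2}$; there is no need to compare two rows. Your orbit recursion, if carried out, would necessarily exhibit $T^4=I$. As written, though, the closed forms, and with them the whole necessity argument, are left undone. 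Also, the side-swapping automorphism in your stabilizer sends $\psi_0^{\pm}$ to $\pm\psi_0^{\pm}$. So in the fermionic case $M_t$ is anti-invariant under it, not constant on its orbits.

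The rank shortcut you propose is justified incorrectly. Membership of $M_t$ in the low-dimensional space of stabilizer-invariant matrices does not bound its rank, because that space contains full-rank matrices such as $I$ (the diagonal pairs $(a,a)$ form orbits). A correct version exists and is in fact slicker than the paper's entry computation. The coin of the Grover walk on $G\otimes G$ corresponds to $2(K\otimes K)-I$ on $\MC{H}^{(2)}$, where $K$ is the orthogonal projection averaging over out-stars of $G$, and the shift corresponds to $S\otimes S$ with $S$ the arc-reversal permutation. Hence both $\MC{U}$ and $\MC{U}^{-1}$ act on amplitude matrices by composing $M\mapsto SMS^{\top}$ with $M\mapsto 2KMK^{\top}-M$, so the rank at most doubles per step. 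With $\MC{U}^4=I$ this gives $\rank\Psi_\tau^{\pm}\le 2,4,8,4$ for $\tau=0,1,2,3$. That is below $2n^2$ for every $n\ge 3$, and for $n=2$ it excludes $\tau=0,1,3$, leaving only the $8\times 8$ check at $\tau=2$. Without periodicity, however, this bound grows like $2^{\tau+1}$ and gives nothing uniform in $\tau$. Periodicity therefore remains the essential ingredient your proposal is missing.
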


The remainder of this paper is organized as follows.
In Section~2, we prepare notation for graphs with algebraic graph theoretical lemmas and the definition of the Grover walk.
In Section~3, we provide the definition of the two-particle Grover walk on $G$ and the proof of Theorem~\ref{thm:main1} using an automorphism of the graph $G\otimes G$.
Section~4 is devoted to the proof of Theorem~\ref{thm:main2}.
Finally, we give a summary and discussion in Section~5.

\section{Preliminaries}

See \cite{godsil2013algebraic} for basic terminology related to graphs.
Let $G =(V, E)$ be a graph with the vertex set $V$ and the edge set $E$.
Throughout this paper, we assume that graphs are simple and finite,
i.e., $|V| < \infty$ and $E \subset \{\{x,y\} \subset V \mid x \neq y\}$.
We define the \emph{adjacency matrix} $A = A(G) \in \MB{C}^{V \times V}$ by
\[ A_{x,y} = \begin{cases}
1 \quad &\text{if $\{x,y\} \in E$,} \\
0 \quad &\text{otherwise.}
\end{cases} \]
The multiset of eigenvalues of the adjacency matrix of a graph is called the \emph{spectrum} of the graph and is denoted by $\Spec(G)$.
The \emph{complete bipartite graph} $K_{m,n}$ is the graph with vertex set and edge set given by
\begin{align*}
V(K_{m,n}) &:= \{ x_1, \dots, x_m, y_1, \dots, y_n \}, \\
E(K_{m,n}) &:= \{ \{x_i, y_j\} \mid i \in \{1, \dots,m \}, j \in \{1, \dots,n\} \}.
\end{align*}

The graph shown in Figure~\ref{0528-1} is $K_{2,3}$.
The corresponding adjacency matrix with respect to the vertex ordering $x_1,x_2,y_1,y_2,y_3$ is
\[ A(K_{2,3}) = \begin{bmatrix} 
0&0&1&1&1 \\ 
0&0&1&1&1 \\ 
1&1&0&0&0 \\ 
1&1&0&0&0 \\ 
1&1&0&0&0
\end{bmatrix}, \]
and computing its spectrum gives $\Spec(K_{2,3}) = \{ [\sqrt{6}]^1, [0]^3, [-\sqrt{6}]^1 \}$.
As shown here, the spectrum is written by enclosing the eigenvalues in square brackets and indicating their multiplicities as a superscript outside the brackets.

\begin{figure}[ht]
\centering
\begin{tikzpicture}
[scale = 0.5,
v/.style = {circle, fill = black, inner sep = 0.6mm},
u/.style = {circle, fill = white, inner sep = 0.1mm}
]
\node[v, label=above:$x_1$] (1) at (-1.2, 3) {};
\node[v, label=above:$x_2$] (2) at (1.2, 3) {};

\node[v, label=below:$y_1$] (3) at (-2.5, 0) {};
\node[v, label=below:$y_2$] (4) at (0, 0) {};
\node[v, label=below:$y_3$] (5) at (2.5, 0) {};

\draw[line width = 1pt] (1) to (3);
\draw[line width = 1pt] (1) to (4);
\draw[line width = 1pt] (1) to (5);

\draw[line width = 1pt] (2) to (3);
\draw[line width = 1pt] (2) to (4);
\draw[line width = 1pt] (2) to (5);
\end{tikzpicture}
\caption{The complete bipartite graph $K_{2,3}$} \label{0528-1}
\end{figure}

\subsection{One-particle Grover walks on graphs} \label{0519-1}

We first introduce the Grover walk on a graph.
The Grover walk discussed here is the Grover walk as a one-particle system.
This walk is also referred to as the arc-reversal walk or the arc-reversal Grover walk.
Let $G = (V, E)$ be a graph.
Define $\MC{A} = \MC{A}(G) := \{ (x, y), (y, x) \mid \{x, y\} \in E \}$,
which is the set of the \emph{symmetric arcs} of $G$.
The origin $x$ and terminus $y$ of $a=(x, y) \in \MC{A}$ are denoted by $o(a)$ and $t(a)$, respectively.
We write the inverse arc of $a$ as $a^{-1}$.
We regard the vector space $\MB{C}^{\MC{A}}$, whose coordinates are indexed by $\MC{A}$, as a Hilbert space with the standard inner product,
and denote its standard basis by $\{ \BM{e}_a \mid a \in \MC{A} \}$.
We call a vector $\psi \in \MB{C}^{\MC{A}}$ with $\| \psi \| = 1$ a \emph{quantum state}, or simply a \emph{state}.
The \emph{time evolution operator} $U = U(G) \in \MB{C}^{\MC{A} \times \MC{A}}$ for the Grover walk is defined by
\[ U_{a,b} := \frac{2}{\deg t(b)} \delta_{o(a), t(b)} - \delta_{a,b^{-1}} \]
for $a,b \in \MC{A}$, where $\delta$ is the Kronecker delta.
Although in this paper we have defined the time evolution operator $U$ by directly specifying its $(a,b)$-entry,
the Grover walk is a type of coined quantum walk,
and $U$ can be written as the product of a coin operator and a shift operator.
See~\cite{kubota2021quantum} for details on this formulation.
In our analysis, we will focus on how $U$ acts on the standard basis vectors:

\begin{lem} \label{0507-1}
For $a \in \MC{A}$, we have
\[ U\BM{e}_a = \frac{2}{\deg t(a)}\sum_{\substack{z \in \MC{A} \\ o(z) = t(a)}} \BM{e}_z - \BM{e}_{a^{-1}}. \]
\end{lem}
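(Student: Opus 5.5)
The identity follows by reading off the $a$-th column of $U$ from the definition of its entries. Since $U\BM{e}_a$ is the $a$-th column of $U$, we have $U\BM{e}_a = \sum_{z \in \MC{A}} U_{z,a}\BM{e}_z$. We will evaluate each coefficient $U_{z,a}$ by substituting the pair $(z,a)$ into the defining formula $U_{z,a} = \frac{2}{\deg t(a)}\delta_{o(z),t(a)} - \delta_{z,a^{-1}}$.

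We then split the sum according to the two Kronecker deltas. The first delta is nonzero exactly when $o(z) = t(a)$. The factor $\frac{2}{\deg t(a)}$ does not depend on $z$, so it comes out of the sum. This part therefore contributes $\frac{2}{\deg t(a)}\sum_{z:\, o(z)=t(a)}\BM{e}_z$. The second delta is nonzero only for the single arc $z = a^{-1}$, which gives the term $-\BM{e}_{a^{-1}}$. Adding the two contributions yields exactly the claimed formula.

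The only step that needs care is checking that no edge case goes wrong. Because $a^{-1}$ satisfies $o(a^{-1}) = t(a)$, it also appears in the first sum, so its total coefficient is $\frac{2}{\deg t(a)} - 1$. This is consistent with the stated formula, which keeps the two contributions separate, so no adjustment is needed. We also need $\deg t(a) \ge 1$ for the coefficient to be defined, and this holds because $t(a)$ is an endpoint of an edge. There is no real obstacle here: the statement is just a direct transcription of the definition of $U$.
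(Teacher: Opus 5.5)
Your proof is correct and is essentially the paper's argument: both substitute the defining formula $U_{z,a} = \frac{2}{\deg t(a)}\delta_{o(z),t(a)} - \delta_{z,a^{-1}}$ and use the Kronecker deltas to restrict the sum. The only cosmetic difference is that the paper compares the $w$-th coordinates of the two sides, whereas you expand the $a$-th column $\sum_{z} U_{z,a}\BM{e}_z$ directly in the standard basis.
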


\begin{proof}
Indeed, for $w \in \MC{A}$,
\begin{align*}
(U\BM{e}_a)_w &= \sum_{z \in \MC{A}} U_{w,z}(\BM{e}_a)_z \\
&= \sum_{z \in \MC{A}} \left( \frac{2}{\deg t(z)} \delta_{o(w), t(z)} - \delta_{w, z^{-1}} \right) \delta_{a,z} \\
&= \frac{2}{\deg t(a)} \delta_{o(w), t(a)} - \delta_{w,a^{-1}}.
\end{align*}
On the other hand,
\[ \left( \frac{2}{\deg t(a)}\sum_{\substack{z \in \MC{A} \\ o(z) = t(a)}} \BM{e}_z - \BM{e}_{a^{-1}} \right)_w
= \frac{2}{\deg t(a)}\sum_{\substack{z \in \MC{A} \\ o(z) = t(a)}} \delta_{z,w} - \delta_{a^{-1}, w}. \]
For the sum on the right-hand side,
if $o(w) = t(a)$, then the arc $z$ satisfying $z = w$ automatically satisfies $o(z) = t(a)$.
Hence,
\[ \sum_{\substack{z \in \MC{A} \\ o(z) = t(a)}} \delta_{z,w} = 1. \]
Otherwise, the arc $z$ satisfying $z = w$ does not satisfy $o(z) = t(a)$.
Hence,
\[ \sum_{\substack{z \in \MC{A} \\ o(z) = t(a)}} \delta_{z,w} = 0. \]
In particular,
\[ \sum_{\substack{z \in \MC{A} \\ o(z) = t(a)}} \delta_{z,w} = \delta_{o(w), t(a)}. \]
Therefore, we obtain the desired equality.
\end{proof}

The action described in Lemma~\ref{0507-1} can be visualized as shown in Figure~\ref{0519-3}.
In this figure, the action of $U$ collects the arcs starting from the vertex $x$, where $x$ is the terminal vertex of the original arc.
Except for the reverse arc of the original one,
these arcs have weight $\frac{2}{\deg x}$,
while the reverse arc has weight $\frac{2}{\deg x} - 1$.

\begin{figure}[htb]
\begin{center}
\begin{tikzpicture}
[scale = 0.5,
v/.style = {circle, fill = black, inner sep = 0.6mm},
u/.style = {circle, fill = white, inner sep = 0.1mm}
]
\node[u] (x) at (0.7, 0) {$x$};
\node[u, white] (13) at (1.25, 2.7) {{\scriptsize$\frac{2}{\deg x}$}};
\node[u, white] (15) at (0, -2) {{\scriptsize$\frac{2}{\deg x} - 1$}};
\node[v] (1) at (0.5,0.5) {};
\node[v] (2) at (0.5, 2) {};
\node[v] (3) at (2, 2) {};
\node[v] (4) at (2, 0.5) {};
\draw[line width = 1pt] (1) to (2);
\draw[line width = 1pt] (1) to (3);
\draw[line width = 1pt] (1) to (4);
\node[v] (5) at (-0.5,-0.5) {};
\node[v] (6) at (-0.5, -2) {};
\node[v] (7) at (-2, -2) {};
\node[v] (8) at (-2, -0.5) {};
\draw[line width = 1pt] (1) to (5);
\draw[line width = 1pt] (5) to (6);
\draw[line width = 1pt] (5) to (7);
\draw[line width = 1pt] (5) to (8);
\node[u] (51L) at (-0.5, -0.2) {};
\node[u] (51R) at (0.2, 0.5) {};
\draw[draw = blue, line width = 1pt, ->] (51L) to (51R);
\end{tikzpicture}
\raisebox{45pt}{$\quad \overset{U}{\mapsto} \quad$}
\begin{tikzpicture}
[scale = 0.5,
v/.style = {circle, fill = black, inner sep = 0.6mm},
u/.style = {circle, fill = white, inner sep = 0.1mm}
]
\node[u] (15) at (3, -2) {\textcolor{blue}{{\small$\frac{2}{\deg x} - 1$}}};
\node[u] (12) at (-0.7, 1.25) {\textcolor{blue}{{\small$\frac{2}{\deg x}$}}};
\node[u] (13) at (1.3, 2.8) {\textcolor{blue}{{\small$\frac{2}{\deg x}$}}};
\node[u] (14) at (1.8, -0.4) {\textcolor{blue}{{\small$\frac{2}{\deg x}$}}};
\node[u] (x) at (0.7, 0) {$x$};
\node[v] (1) at (0.5,0.5) {};
\node[v] (2) at (0.5, 2) {};
\node[v] (3) at (2, 2) {};
\node[v] (4) at (2, 0.5) {};
\draw[line width = 1pt] (1) to (2);
\draw[line width = 1pt] (1) to (3);
\draw[line width = 1pt] (1) to (4);
\node[v] (5) at (-0.5,-0.5) {};
\node[v] (6) at (-0.5, -2) {};
\node[v] (7) at (-2, -2) {};
\node[v] (8) at (-2, -0.5) {};
\draw[line width = 1pt] (1) to (5);
\draw[line width = 1pt] (5) to (6);
\draw[line width = 1pt] (5) to (7);
\draw[line width = 1pt] (5) to (8);
\node[u] (15R) at (0.5, 0.2) {};
\node[u] (15L) at (-0.2, -0.5) {};
\draw[draw = blue, line width = 1pt, ->] (15R) to (15L);
\node[u] (12o) at (0.2, 0.7) {};
\node[u] (12t) at (0.2, 1.8) {};
\draw[draw = blue, line width = 1pt, ->] (12o) to (12t);
\node[u] (13o) at (0.65, 0.93) {};
\node[u] (13t) at (1.7, 2) {};
\draw[draw = blue, line width = 1pt, ->] (13o) to (13t);
\node[u] (14o) at (0.7, 0.3) {};
\node[u] (14t) at (1.8, 0.3) {};
\draw[draw = blue, line width = 1pt, ->] (14o) to (14t);
\draw[-, blue] (1.5,-1.9) to [bend left = 15] (0.3,-0.3);
\end{tikzpicture}
\caption{The action of $U$: Each blue arrow represents the standard basis vector labeled by the corresponding arc. 
The left and right figures represent the before and after states of the action $U$, respectively. We omit depicting arrows of corresponding arcs returning the value $0$ of the states.} \label{0519-3}
\end{center}
\end{figure}

Similarly, the action of $U^*$ on the standard basis vector $\BM{e}_a$ is given as follows.
The proof is similar to that of Lemma~\ref{0507-1},
and hence is omitted.

\begin{lem} \label{0513-1}
For $a \in \MC{A}$, we have
\[ U^* \BM{e}_a = \frac{2}{\deg o(a)} \sum_{\substack{z \in \MC{A} \\ t(z) = o(a) }} \BM{e}_z - \BM{e}_{a^{-1}}. \]
\end{lem}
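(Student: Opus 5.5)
The plan is to compute $(U^*\BM{e}_a)_w$ entrywise, exactly as in the proof of Lemma~\ref{0507-1}, after first identifying the entries of $U^*$ from the definition of $U$. Since $U$ has real entries, $U^* = U^{\top}$, so $(U^*)_{a,b} = U_{b,a}$. Hence
\[ (U^*)_{a,b} = \frac{2}{\deg t(a)} \delta_{o(b), t(a)} - \delta_{b,a^{-1}}. \]

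For $w \in \MC{A}$ I then compute
\[ (U^*\BM{e}_a)_w = \sum_{z \in \MC{A}} (U^*)_{w,z} (\BM{e}_a)_z = (U^*)_{w,a} = \frac{2}{\deg t(w)} \delta_{o(a), t(w)} - \delta_{a, w^{-1}}. \]
When the Kronecker delta is nonzero we have $t(w) = o(a)$, so the factor $\frac{2}{\deg t(w)}$ can be replaced by $\frac{2}{\deg o(a)}$. The second term is $\delta_{w,a^{-1}}$, because $a = w^{-1}$ holds exactly when $w = a^{-1}$.

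Next I evaluate the right-hand side of the claimed identity at the coordinate $w$. Repeating the counting argument from the proof of Lemma~\ref{0507-1} with $t(z) = o(a)$ in place of $o(z) = t(a)$ gives
\[ \sum_{\substack{z \in \MC{A} \\ t(z) = o(a)}} \delta_{z,w} = \delta_{t(w), o(a)}. \]
This shows that the coordinate of the right-hand side at $w$ is $\frac{2}{\deg o(a)}\delta_{t(w),o(a)} - \delta_{a^{-1},w}$, which matches the entry computed above.

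The argument is routine. The only points that need care are the transpose step (that is, $U$ is real, so $U^* = U^\top$) and the replacement of $\deg t(w)$ by $\deg o(a)$, which is valid only on the support of the delta.
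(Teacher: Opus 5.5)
Your proof is correct and takes essentially the same approach as the paper. The paper omits the argument as similar to that of Lemma~\ref{0507-1}, meaning an entrywise comparison, and you supply exactly the needed details: $(U^*)_{w,a} = U_{a,w}$ because $U$ is real, and $\deg t(w)$ may be replaced by $\deg o(a)$ on the support of $\delta_{o(a),t(w)}$.
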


The action described in Lemma~\ref{0513-1} can also be visualized as shown in Figure~\ref{0602-1}.
In this figure, the action of $U^*$ collects the arcs ending at the vertex $y$,
where $y$ is the initial vertex of the original arc.
Except for the reverse arc of the original one,
these arcs have weight $\frac{2}{\deg y}$,
while the reverse arc has weight $\frac{2}{\deg y} - 1$.

\begin{figure}[htb]
\begin{center}
\begin{tikzpicture}
[scale = 0.5,
v/.style = {circle, fill = black, inner sep = 0.6mm},
u/.style = {circle, fill = white, inner sep = 0.1mm}
]
\node[v] (1) at (0.5,0.5) {};
\node[v] (2) at (0.5, 2) {};
\node[v] (3) at (2, 2) {};
\node[v] (4) at (2, 0.5) {};
\draw[line width = 1pt] (1) to (2);
\draw[line width = 1pt] (1) to (3);
\draw[line width = 1pt] (1) to (4);
\node[v] (5) at (-0.5,-0.5) {};
\node[v] (6) at (-0.5, -2) {};
\node[v] (7) at (-2, -2) {};
\node[v] (8) at (-2, -0.5) {};
\draw[line width = 1pt] (1) to (5);
\draw[line width = 1pt] (5) to (6);
\draw[line width = 1pt] (5) to (7);
\draw[line width = 1pt] (5) to (8);
\node[u] (51L) at (-0.5, -0.2) {};
\node[u] (51R) at (0.2, 0.5) {};
\draw[draw = blue, line width = 1pt, ->] (51L) to (51R);
\node[u] (x) at (0.2, -0.7) {$y$};
\node[inner sep=0pt, outer sep=0pt, minimum size=0pt] (d) at (0,-3.5) {};
\end{tikzpicture}
\raisebox{45pt}{$\quad \overset{U^*}{\mapsto} \quad$}
\begin{tikzpicture}
[scale = 0.5,
v/.style = {circle, fill = black, inner sep = 0.6mm},
u/.style = {circle, fill = white, inner sep = 0.1mm}
]
\node[u] (15) at (3.5, -0.5) {\textcolor{blue}{{\small$\frac{2}{\deg y} - 1$}}};
\node[u] (12) at (0.7, -1.5) {\textcolor{blue}{{\small$\frac{2}{\deg y}$}}};
\node[u] (13) at (-1.3, -2.8) {\textcolor{blue}{{\small$\frac{2}{\deg y}$}}};
\node[u] (14) at (-1.5, 0.5) {\textcolor{blue}{{\small$\frac{2}{\deg y}$}}};
\node[u] (x) at (0.2, -0.7) {$y$};
\node[v] (1) at (0.5,0.5) {};
\node[v] (2) at (0.5, 2) {};
\node[v] (3) at (2, 2) {};
\node[v] (4) at (2, 0.5) {};
\draw[line width = 1pt] (1) to (2);
\draw[line width = 1pt] (1) to (3);
\draw[line width = 1pt] (1) to (4);
\node[v] (5) at (-0.5,-0.5) {};
\node[v] (6) at (-0.5, -2) {};
\node[v] (7) at (-2, -2) {};
\node[v] (8) at (-2, -0.5) {};
\draw[line width = 1pt] (1) to (5);
\draw[line width = 1pt] (5) to (6);
\draw[line width = 1pt] (5) to (7);
\draw[line width = 1pt] (5) to (8);
\node[u] (15R) at (0.5, 0.2) {};
\node[u] (15L) at (-0.2, -0.5) {};
\draw[draw = blue, line width = 1pt, ->] (15R) to (15L);
\node[u] (12o) at (-0.2, -0.7) {};
\node[u] (12t) at (-0.2, -1.8) {};
\draw[draw = blue, line width = 1pt, ->] (12t) to (12o);
\node[u] (13o) at (-0.65, -0.93) {};
\node[u] (13t) at (-1.7, -2) {};
\draw[draw = blue, line width = 1pt, ->] (13t) to (13o);
\node[u] (14o) at (-0.7, -0.3) {};
\node[u] (14t) at (-1.8, -0.3) {};
\draw[draw = blue, line width = 1pt, ->] (14t) to (14o);
\draw[-, blue] (2, -0.3) to [bend left = -15] (0.3,-0.2);
\end{tikzpicture}
\caption{The action of $U^*$: The action $U^*$ is given by flipping the direction of every arrow in before and after states of the action $U$.} \label{0602-1}
\end{center}
\end{figure}

\subsection{Additional material on algebraic graph theory}

Let $G = (V, E)$ be a graph.
A mapping $g: V \to V$ is an \emph{automorphism} of $G$
if $g$ is bijective, and $\{x,y\} \in E$
if and only if $\{g(x), g(y) \} \in E$.
We denote the set of all automorphisms of $G$ by $\Aut(G)$.


Let $g$ be an automorphism of a graph $G$,
and let $\MC{A} = \MC{A}(G)$ be the symmetric arc set.
Define $\tilde{g}: \MC{A} \to \MC{A}$ by
$\tilde{g}(\left(x,y)\right) = (g(x), g(y))$.
Clearly, $\tilde{g}$ is a bijection.
Thus, the permutation matrix $N_{\tilde{g}} \in \MB{C}^{\MC{A} \times \MC{A}}$ is defined by $(N_{\tilde{g}})_{a,b} = \delta_{a, \tilde{g}(b)}$.
Note that
\begin{equation} \label{0703-1}
N_{\tilde{g}} \BM{e}_a = \BM{e}_{\tilde{g}(a)}
\end{equation}
for any $a \in \MC{A}$.
The matrix $N_{\tilde{g}}$ commutes with the time evolution operator $U$ defined in Section~\ref{0519-1}:


\begin{lem}[{\cite[Lemma~3.2]{kubota2025circulant}}] \label{0519-2}
Let $g$ be an automorphism of a graph $G$,
and let $U = U(G)$ be the time evolution operator of the Grover walk on $G$.
Then we have $U N_{\tilde{g}} = N_{\tilde{g}} U$.
\end{lem}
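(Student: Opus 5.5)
We want to prove $U N_{\tilde{g}} = N_{\tilde{g}} U$. Since both sides are linear operators on $\MB{C}^{\MC{A}}$, it suffices to check that they agree on every standard basis vector $\BM{e}_a$ with $a \in \MC{A}$. We will compute both $U N_{\tilde{g}} \BM{e}_a$ and $N_{\tilde{g}} U \BM{e}_a$ explicitly, using \eqref{0703-1} and Lemma~\ref{0507-1}.

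\emph{Left-hand side.} By \eqref{0703-1} followed by Lemma~\ref{0507-1},
\[ U N_{\tilde{g}} \BM{e}_a = U \BM{e}_{\tilde{g}(a)} = \frac{2}{\deg t(\tilde{g}(a))} \sum_{\substack{z \in \MC{A} \\ o(z) = t(\tilde{g}(a))}} \BM{e}_z - \BM{e}_{\tilde{g}(a)^{-1}}. \]

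\emph{Right-hand side.} By Lemma~\ref{0507-1} and linearity of $N_{\tilde{g}}$,
\[ N_{\tilde{g}} U \BM{e}_a = \frac{2}{\deg t(a)} \sum_{\substack{w \in \MC{A} \\ o(w) = t(a)}} \BM{e}_{\tilde{g}(w)} - \BM{e}_{\tilde{g}(a^{-1})}. \]

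To match the two expressions we use three elementary facts about an automorphism $g$ and the induced map $\tilde{g}$:
\begin{enumerate}[(i)]
\item By the definition of $\tilde{g}$, we have $t(\tilde{g}(a)) = g(t(a))$ and $o(\tilde{g}(a)) = g(o(a))$.
\item Hence $\tilde{g}(a^{-1}) = \tilde{g}(a)^{-1}$, which matches the subtracted terms.
\item Automorphisms preserve degree, so $\deg g(x) = \deg x$. This holds because $g$ restricts to a bijection from the neighbours of $x$ onto the neighbours of $g(x)$.
\end{enumerate}
Fact (iii) gives $\deg t(\tilde{g}(a)) = \deg t(a)$, so the coefficients agree.

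It remains to reindex the sum. We claim that $w \mapsto \tilde{g}(w)$ is a bijection from $\{w \in \MC{A} : o(w) = t(a)\}$ onto $\{z \in \MC{A} : o(z) = g(t(a))\}$.
\begin{itemize}
\item \emph{Well defined.} If $o(w) = t(a)$, then $o(\tilde{g}(w)) = g(o(w)) = g(t(a))$ by (i).
\item \emph{Injective.} This holds because $\tilde{g}$ is a bijection on $\MC{A}$.
\item \emph{Surjective.} Given $z$ with $o(z) = g(t(a))$, put $w = \tilde{g}^{-1}(z)$. Then $o(w) = g^{-1}(o(z)) = t(a)$.
\end{itemize}
With this bijection the two sums coincide, and the two expressions are equal.

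The only step needing any care is this bijection between the outgoing-arc sets. It rests on $\tilde{g}$ mapping arcs to arcs, which is exactly the edge-preservation property of $g$; everything else is bookkeeping.
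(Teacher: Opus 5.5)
The paper does not prove this lemma; it cites \cite[Lemma~3.2]{kubota2025circulant}, so there is no in-paper argument to compare with. Your proof is correct and self-contained. Checking the identity on basis vectors via \eqref{0703-1} and Lemma~\ref{0507-1} reduces it to three facts:
\begin{itemize}
\item $t(\tilde{g}(a)) = g(t(a))$ and $o(\tilde{g}(a)) = g(o(a))$;
\item $\tilde{g}(a^{-1}) = \tilde{g}(a)^{-1}$;
\item $\deg g(x) = \deg x$.
\end{itemize}
Your bijection between the arcs leaving $t(a)$ and the arcs leaving $g(t(a))$ then matches the two sums exactly.

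A slightly shorter equivalent route works with matrix entries instead of basis vectors. Since $N_{\tilde{g}}$ is a permutation matrix, $N_{\tilde{g}}^{-1} = N_{\tilde{g}}^{\top}$ and $(N_{\tilde{g}}^{\top} U N_{\tilde{g}})_{a,b} = U_{\tilde{g}(a),\tilde{g}(b)}$. Substitute your three facts into $U_{a,b} = \frac{2}{\deg t(b)}\delta_{o(a),t(b)} - \delta_{a,b^{-1}}$, and use injectivity of $g$ and $\tilde{g}$ to rewrite $\delta_{g(o(a)),g(t(b))}$ as $\delta_{o(a),t(b)}$ and $\delta_{\tilde{g}(a),\tilde{g}(b^{-1})}$ as $\delta_{a,b^{-1}}$. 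This gives $U_{\tilde{g}(a),\tilde{g}(b)} = U_{a,b}$ directly, with no sum to reindex. Your version handles the same content through an explicit bijection of out-neighbourhoods, which is equally rigorous and slightly more visual.
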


Let $G$ and $H$ be graphs,
and let $A$ and $B$ be their adjacency matrices, respectively.
We call the graph whose adjacency matrix is $A \otimes B$ the \emph{Kronecker product} of $G$ and $H$,
and denote it by $G \otimes H$.
Equivalently,
$G \otimes H$ is the graph whose vertex set is $V(G) \times V(H)$,
where two vertices $(x_1, y_1)$ and $(x_2, y_2)$ are adjacent if and only if $\{x_1, x_2\} \in E(G)$ and $\{y_1, y_2\} \in E(H)$.
In this paper, we consider the Kronecker product of a graph with itself.
We briefly discuss the symmetry of such graphs.

\begin{lem} \label{0501-1}
Let $G$ be a graph, and let $g:V(G \otimes G) \to V(G \otimes G)$ be a mapping defined by $g(x,y) = (y,x)$.
Then $g$ is an automorphism of $G \otimes G$.
\end{lem}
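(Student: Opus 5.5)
The plan is to check the two defining conditions of an automorphism directly from the definition of $G \otimes G$. The lemma is elementary, so the proof amounts to unpacking definitions.

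\emph{Bijectivity.} The vertex set of $G \otimes G$ is $V(G) \times V(G)$, and the map $g(x,y) = (y,x)$ sends this set to itself. Applying it twice gives $g(g(x,y)) = g(y,x) = (x,y)$, so $g \circ g$ is the identity. Hence $g$ is its own inverse and is a bijection.

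\emph{Adjacency.} Take two vertices $(x_1,y_1)$ and $(x_2,y_2)$. By the equivalent description of the Kronecker product given above, they are adjacent in $G \otimes G$ if and only if $\{x_1,x_2\} \in E(G)$ and $\{y_1,y_2\} \in E(G)$. Their images are $g(x_1,y_1) = (y_1,x_1)$ and $g(x_2,y_2) = (y_2,x_2)$. These images are adjacent if and only if $\{y_1,y_2\} \in E(G)$ and $\{x_1,x_2\} \in E(G)$. This is the same pair of conditions as before, only listed in the other order, because both factors of the product are the same graph $G$. So adjacency is preserved in both directions.

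One could also argue with matrices: the permutation matrix $S$ of $g$ on $\MB{C}^{V} \otimes \MB{C}^{V}$ satisfies $S(A \otimes A)S^{\top} = A \otimes A$. The direct argument above is simpler, and there is no real obstacle. The only point worth stating explicitly is that the argument uses the fact that both factors equal $G$. For a product $G \otimes H$ with $G \neq H$, the swap would not even map the vertex set $V(G) \times V(H)$ to itself.
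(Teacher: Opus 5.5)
Your proof is correct and follows the paper's argument: the paper calls bijectivity clear, which you justify by noting that $g$ is an involution, and it then uses the same chain of equivalences showing that $(x_1,y_1)$ and $(x_2,y_2)$ are adjacent exactly when $(y_1,x_1)$ and $(y_2,x_2)$ are. One small inaccuracy in your closing aside, which does not affect the proof: if $G \neq H$ share a vertex set, the swap does map $V(G)\times V(H)$ to itself, but it need not preserve adjacency.
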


\begin{proof}
Clearly, the mapping $g$ is bijective.
We show that $g$ preserves the adjacency relation.
Indeed,
\begin{align*}
\{ (x_1, y_1), (x_2, y_2) \} \in E(G \otimes G)
&\iff \{x_1, x_2\}, \{y_1, y_2\} \in E(G) \\
&\iff \{y_1, y_2\}, \{x_1, x_2\} \in E(G) \\
&\iff \{ (y_1, x_1), (y_2, x_2) \} \in E(G \otimes G) \\
&\iff \{ g(x_1, y_1), g(x_2, y_2) \} \in E(G \otimes G).
\end{align*}
Hence, $g$ is an automorphism of $G \otimes G$.
\end{proof}

\section{Two-particle quantum walks} \label{0527-1}

Let $G$ be a graph.
We call the Hilbert space $\MC{H}^{(2)}(G) := \MB{C}^{\MC{A}(G)} \otimes \MB{C}^{\MC{A}(G)}$ with the standard inner product the \emph{two-particle Hilbert space on $G$}. 
When the graph $G$ under consideration is clear from the context,
we simply write $\MC{H}^{(2)}$.
For $\psi \in \MC{H}^{(2)}$, we denote by $\psi(a,b)$ the coefficient of $\BM{e}_a \otimes \BM{e}_b$.
That is, we write
\begin{equation} \label{0611-1}
\psi = \sum_{a,b \in \MC{A}} \psi(a,b) \BM{e}_a \otimes \BM{e}_b.
\end{equation}
Define $P: \MC{H}^{(2)} \to \MC{H}^{(2)}$ by $P(\BM{e}_a \otimes \BM{e}_b) = \BM{e}_b \otimes \BM{e}_a$ and extend it linearly.
We call this linear operator the \emph{swap operator}.
For convenience, we use the same symbol $P$ for the matrix representation of this linear operator with respect to the standard basis $\{\BM{e}_a \otimes \BM{e}_b \mid a,b \in \MC{A}(G) \}$ of $\MC{H}^{(2)}$.
A vector $\psi \in \MC{H}^{(2)}$ is said to be a \emph{bosonic state} if $P\psi = \psi$, and a \emph{fermionic state} if $P\psi = -\psi$.
Our first aim is to find a natural correspondence between the two-particle Hilbert space on $G$ and the one-particle Hilbert space on another graph.

\subsection{Two-particle quantum walk via the one-particle Grover walk on $G \otimes G$}

The most naive way to define the time evolution operator of a two-particle quantum walk on $\MC{H}^{(2)}$ is to use $U \otimes U$,
where $U$ is the time evolution operator of the one-particle quantum walk on each tensor factor $\MB{C}^{\MC{A}(G)}$.
However, this time evolution is nothing but applying the time evolution operator of the one-particle quantum walk independently to each particle,
and no nontrivial behavior as a two-particle system can be expected.
At least for the entanglement entropy defined later,
it does not change under this time evolution.
Hence, we consider a one-particle quantum walk on the Kronecker product of graphs.

\begin{lem} \label{0422-1}
Let $G$ be a graph, and define a mapping $r: \MC{A}(G) \times \MC{A}(G) \to \MC{A}(G \otimes G)$ by
\[ r((x,y), (z,w)) = ((x,z), (y,w)) \]
for $(x,y), (z,w) \in \MC{A}(G)$.
Then, $r$ is well-defined and bijective.
\end{lem}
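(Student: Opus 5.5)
To prove Lemma~\ref{0422-1} I would argue directly from the definitions, in two parts: first that $r$ really lands in $\MC{A}(G \otimes G)$, then that it is a bijection. For bijectivity I would exhibit an explicit inverse rather than count.

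\textbf{Well-definedness.} Take $(x,y), (z,w) \in \MC{A}(G)$. By definition of the symmetric arc set, $\{x,y\} \in E(G)$ and $\{z,w\} \in E(G)$. By the description of the Kronecker product given before Lemma~\ref{0501-1}, two vertices $(x,z)$ and $(y,w)$ of $G \otimes G$ are adjacent exactly when both coordinate pairs are edges of $G$. Here the coordinate pairs are $\{x,y\}$ and $\{z,w\}$, so $\{(x,z),(y,w)\} \in E(G\otimes G)$. Hence $((x,z),(y,w)) \in \MC{A}(G \otimes G)$. Since $G$ is simple we have $x \ne y$, so $(x,z) \neq (y,w)$ and this is a genuine arc.

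\textbf{Bijectivity.} I would define $s: \MC{A}(G\otimes G) \to \MC{A}(G)\times\MC{A}(G)$ by
\[ s(((x,z),(y,w))) = ((x,y),(z,w)). \]
If $((x,z),(y,w)) \in \MC{A}(G\otimes G)$, then $(x,z)$ and $(y,w)$ are adjacent in $G \otimes G$. The same equivalence as above gives $\{x,y\}, \{z,w\} \in E(G)$, so $s$ is well defined. Composing the two maps shows directly that $s \circ r$ and $r \circ s$ are identities, so $r$ is bijective.

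There is no real obstacle here. The only point needing care is to use the adjacency characterization of $G\otimes G$ in both directions, once for $r$ and once for $s$. As a consistency check, $|\MC{A}(G\otimes G)| = 2|E(G\otimes G)| = \mathbf{1}^\top (A\otimes A)\mathbf{1} = (2|E|)^2$, which matches the size of the domain.
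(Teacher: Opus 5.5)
Your proof is correct and matches the paper's: well-definedness comes from the adjacency characterization of $G \otimes G$. The paper writes that characterization as a chain of equivalences, and reading it backwards is exactly what makes your explicit inverse $s$ well defined, a step the paper simply calls ``clear.''
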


\begin{proof}
Once it is shown that $r$ is well-defined,
it is clear that $r$ is bijective.
Therefore, we show that $r$ is well-defined, that is,
we show that if $(x,y), (z,w) \in \MC{A}(G)$,
then $((x,z), (y,w)) \in \MC{A}(G \otimes G)$.
Indeed,
\begin{align*}
(x,y), (z,w) \in \MC{A}(G)
&\iff \{x,y\}, \{z,w\} \in E(G) \\
&\iff \{ (x,z), (y,w) \} \in E(G \otimes G) \\
&\iff ((x,z), (y,w)) \in \MC{A}(G \otimes G),
\end{align*}
as claimed.
\end{proof}

The bijection $r$ given in Lemma~\ref{0422-1} allows us to identify the two-particle Hilbert space on $G$ with the one-particle Hilbert space on $G \otimes G$.
Indeed, let $R: \MC{H}^{(2)}(G) \to \MB{C}^{\MC{A}(G \otimes G)}$ be the linear operator obtained by linearly extending the correspondence $R(\BM{e}_a \otimes \BM{e}_b) = \BM{e}_{r(a,b)}$.
Then, $R$ is an isomorphism between $\MC{H}^{(2)}(G)$ and $\MB{C}^{\MC{A}(G \otimes G)}$.
The action of $R$ is important in actual computations.
In particular, when $a=(x,y)$ and $b=(z,w)$, we have
\[
R(\BM{e}_{(x,y)} \otimes \BM{e}_{(z,w)}) = \BM{e}_{r((x,y),(z,w))} = \BM{e}_{(x,z),(y,w)}.
\]
Here and throughout, $\BM{e}_{(x,z),(y,w)}$ denotes $\BM{e}_{((x,z),(y,w))}$.
It is convenient to remember the action of $R$ as removing ``$\otimes \BM{e}$" and interchanging $y$ and $z$.

For a graph $G$,
let $U = U(G \otimes G)$ be the time evolution operator of the Grover walk on $G \otimes G$, and define the operator $\MC{U} = \MC{U}(G)$ on $\MC{H}^{(2)}(G)$ by
\begin{equation} \label{0423-1}
\MC{U} := R^{-1}UR.
\end{equation}
Since $R$ and $U$ are unitary, $\MC{U}$ is also unitary.
We call $\MC{U}$ the \emph{time evolution operator of the two-particle Grover walk on $G$}.
That is, we define the \emph{two-particle Grover walk on $G$} via the one-particle Grover walk on $G \otimes G$.
For this two-particle walk to be consistent with quantum mechanics,
the time evolution operator $\MC{U}$ is required to commute with the swap operator $P$~\cite{sakurai2020modern}.
This commutativity can be shown by considering the symmetry of $G \otimes G$.

\begin{thm} \label{0527-2}
Let $\MC{U}$ be the time evolution operator of the two-particle Grover walk on a graph $G$,
and let $P$ be the swap operator.
Then, we have $\MC{U}P = P\MC{U}$.
\end{thm}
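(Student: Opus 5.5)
The plan is to transport the swap operator $P$ through the isomorphism $R$ and recognize it as the permutation matrix induced by the automorphism $g$ of Lemma~\ref{0501-1}. Once this is done, Lemma~\ref{0519-2} gives the commutation on $\MB{C}^{\MC{A}(G\otimes G)}$, and conjugating back by $R$ yields the claim.

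\textbf{Step 1.} Let $g(x,y)=(y,x)$ be the automorphism of $G\otimes G$ from Lemma~\ref{0501-1}, and let $\tilde g$ be the induced bijection on $\MC{A}(G\otimes G)$. By \eqref{0703-1}, the permutation matrix $N_{\tilde g}$ acts by
\[ N_{\tilde g}\BM{e}_{(x,z),(y,w)} = \BM{e}_{(z,x),(w,y)}. \]
I will prove the identity $RP = N_{\tilde g}R$. Since all maps are linear, it suffices to check it on basis vectors $\BM{e}_a\otimes\BM{e}_b$ with $a=(x,y)$ and $b=(z,w)$. On one side,
\[ RP(\BM{e}_{(x,y)}\otimes\BM{e}_{(z,w)}) = R(\BM{e}_{(z,w)}\otimes\BM{e}_{(x,y)}) = \BM{e}_{(z,x),(w,y)}. \]
On the other side,
\[ N_{\tilde g}R(\BM{e}_{(x,y)}\otimes\BM{e}_{(z,w)}) = N_{\tilde g}\BM{e}_{(x,z),(y,w)} = \BM{e}_{(z,x),(w,y)}. \]
The two sides agree, so $P = R^{-1}N_{\tilde g}R$.

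\textbf{Step 2.} Let $U=U(G\otimes G)$. Lemma~\ref{0519-2}, applied to the graph $G\otimes G$ and the automorphism $g$, gives $UN_{\tilde g}=N_{\tilde g}U$. Using \eqref{0423-1},
\[ \MC{U}P = R^{-1}UR\,R^{-1}N_{\tilde g}R = R^{-1}UN_{\tilde g}R = R^{-1}N_{\tilde g}UR = R^{-1}N_{\tilde g}R\,R^{-1}UR = P\MC{U}. \]
This completes the argument.

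The only real obstacle is the bookkeeping in Step~1. One must confirm that the index interchange performed by $R$ (moving $y$ and $z$) turns the particle swap into the coordinatewise swap $g$ applied to both the origin and the terminus of the arc in $G\otimes G$. The basis check above shows that this holds.
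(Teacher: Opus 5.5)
Your proposal is correct and follows the paper's proof essentially verbatim: identify the swap $P$ with $R^{-1}N_{\tilde g}R$ for the automorphism $g(x,y)=(y,x)$ of $G\otimes G$ by a basis check, then conjugate the commutation $UN_{\tilde g}=N_{\tilde g}U$ from Lemma~\ref{0519-2} back through $R$. The only cosmetic difference is that you verify $RP=N_{\tilde g}R$ on the basis of $\MC{H}^{(2)}$, whereas the paper verifies $RPR^{-1}=N_{\tilde g}$ on the basis of $\MB{C}^{\MC{A}(G\otimes G)}$, and these are equivalent.
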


\begin{proof}
By Lemma~\ref{0501-1},
the mapping $g:V(G \otimes G) \to V(G \otimes G)$ defined by $g(x,y) = (y,x)$ is an automorphism of $G \otimes G$.
We consider the permutation matrix $N_{\tilde{g}}$.
By~\eqref{0703-1}, we have
\[ N_{\tilde{g}}\BM{e}_{(x,y),(z,w)} = \BM{e}_{\tilde{g}((x,y), (z,w))}
= \BM{e}_{g(x,y), g(z,w)} = \BM{e}_{(y,x), (w,z)} \]
for any $((x,y), (z,w)) \in \MC{A}(G \otimes G)$.
On the other hand,
\[ RPR^{-1}\BM{e}_{(x,y),(z,w)}
= RP(\BM{e}_{(x,z)} \otimes \BM{e}_{(y,w)})
= R (\BM{e}_{(y,w)} \otimes \BM{e}_{(x,z)}) = \BM{e}_{(y,x), (w,z)}. \]
Thus, we have
\begin{equation}
N_{\tilde{g}} = RPR^{-1},
\end{equation}
and hence Lemma~\ref{0519-2} yields
\begin{align*}
\MC{U}P &= (R^{-1}UR)(R^{-1}N_{\tilde{g}}R) \\
&= R^{-1}UN_{\tilde{g}}R \\
&= R^{-1}N_{\tilde{g}}UR \\
&= R^{-1}(RPR^{-1})(R\MC{U}R^{-1})R \\
&= P\MC{U}
\end{align*}
as claimed.
\end{proof}

\subsection{Entanglement entropy} \label{0602-3}

Let $G$ be a graph,
and let $\MC{H}^{(2)}(G)$ be the two-particle Hilbert space on $G$.
For a quantum state $\psi \in \MC{H}^{(2)}$, 
we use the same notation as in~\eqref{0611-1} and write
\[ \psi = \sum_{a,b \in \MC{A}} \psi(a,b) \BM{e}_a \otimes \BM{e}_b. \]
We call the matrix $\Psi \in \MB{C}^{\MC{A}(G) \times \MC{A}(G)}$ defined by
\begin{equation} \label{0504-1}
\Psi_{a,b} := \psi(a,b)
\end{equation}
the \emph{amplitude matrix} of $\psi$.
We define the \emph{entanglement entropy} $S(\psi)$ of the quantum state $\psi$ by
\begin{equation} \label{0428-1}
S(\psi) := - \sum_{\lambda \in \Spec(\Psi \Psi^*)} \lambda \log \lambda,
\end{equation}
where $\log$ denotes the natural logarithm and $\Spec(\Psi \Psi^*)$ denotes the multiset of eigenvalues of $\Psi \Psi^*$.
We formally set $0 \log 0 = 0$.
A standard definition of entanglement entropy is given in terms of the reduced density matrix, for example, in~\cite{nishioka2018entanglement}.
Although in this paper we do not present detailed definitions of the partial trace or density matrices, 
since $\tr_2(\psi \psi^*) = \Psi \Psi^*$ holds in our notation,
the entanglement entropy defined by~\eqref{0428-1} coincides with the conventional definition.


Here, we briefly check the properties of the eigenvalues of $\Psi \Psi^*$.
First, since $\Psi \Psi^*$ is a Gram matrix, it is positive semidefinite.
In particular, all of its eigenvalues are nonnegative.
Furthermore, the sum of the eigenvalues of $\Psi \Psi^*$ is equal to $1$.
Indeed,
\begin{align*}
\tr(\Psi \Psi^*) &= \sum_{z \in \MC{A}(G)}(\Psi \Psi^*)_{z,z}
= \sum_{z \in \MC{A}(G)} \sum_{a \in \MC{A}(G)} \Psi_{z,a} (\Psi^*)_{a,z} \\
&= \sum_{z \in \MC{A}(G)} \sum_{a \in \MC{A}(G)} \Psi_{z,a} \overline{\Psi_{z,a}}
= \sum_{z,a \in \MC{A}(G)} \psi(z,a) \overline{\psi(z,a)}
= \| \psi \|^2 = 1.
\end{align*}
Thus, the eigenvalues of $\Psi\Psi^*$ can be regarded as a probability distribution.

Let us compute the entanglement entropy of a quantum state in the two-particle Hilbert space $\MC{H}^{(2)}(G)$ on a graph $G$ having at least one edge $\{x,y\}$.
Consider the following state,
which will be used as an initial state in this paper:
\[ \psi = \frac{1}{\sqrt{2}} \left\{ \BM{e}_{(x,y)} \otimes \BM{e}_{(y,x)} \pm \BM{e}_{(y,x)} \otimes \BM{e}_{(x,y)} \right\}. \]
In the case of $+$, the state is bosonic,
while in the case of $-$, it is fermionic.
For either choice of sign,
with respect to a suitable ordering of $\MC{A}(G)$,
we have
\[ 
\Psi = 
\begin{bmatrix} 
0&1/\sqrt{2}&0&\cdots&0 \\ 
\pm1/\sqrt{2}&0&0&\cdots&0 \\ 
0&0&0&\cdots&0 \\ 
\vdots&\vdots&\vdots&\ddots&\vdots \\ 
0&0&0&\cdots&0
\end{bmatrix},
\]
and hence $\Psi \Psi^* = \diag(1/2, 1/2, 0, \dots, 0)$.
Therefore, the entanglement entropy $S(\psi)$ is given by
\begin{equation} \label{0602-2}
S(\psi) = - \frac{1}{2} \log \frac{1}{2} - \frac{1}{2} \log \frac{1}{2} = \log 2.
\end{equation}

For two-particle quantum walks,
the following bounds on the entanglement entropy of quantum states are known.

\begin{lem}[{\cite[Theorem~11.8]{nielsen2010quantum}}]
Let $\psi$ be a quantum state in the two-particle Hilbert space $\MC{H}^{(2)}(G)$.
Then, we have
\[ 0 \leq S(\psi) \leq \log |\MC{A}(G)|, \]
and equality in the upper bound holds if and only if all eigenvalues $\lambda$ of $\Psi \Psi^*$ satisfy
\[ \lambda = \frac{1}{|\MC{A}(G)|}. \]
\end{lem}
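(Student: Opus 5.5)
The plan is to reduce the statement to standard facts about the Schmidt coefficients of $\psi$, namely the spectrum of $\Psi\Psi^*$. Write $N := |\MC{A}(G)|$. As noted above, $\Psi\Psi^*$ is positive semidefinite and $\tr(\Psi\Psi^*) = 1$. Its eigenvalues $\lambda_1,\dots,\lambda_N$ (listed with multiplicity) therefore satisfy $\lambda_i \ge 0$ and $\sum_i \lambda_i = 1$. So $S(\psi)$ is exactly the Shannon entropy $H(\lambda) = -\sum_i \lambda_i \log\lambda_i$ of a probability vector on $N$ points, and the lemma becomes the classical bounds $0 \le H(\lambda) \le \log N$, with equality characterized.

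For the lower bound, I will observe that each $\lambda_i \in [0,1]$. Hence $\log\lambda_i \le 0$, so every term $-\lambda_i\log\lambda_i$ is nonnegative, with the convention $0\log 0 = 0$ covering the zero eigenvalues. Summing gives $S(\psi)\ge 0$.

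For the upper bound, I plan to use concavity of $f(t) = -t\log t$ on $[0,\infty)$ together with Jensen's inequality. This gives
\[
\frac{1}{N}\sum_{i=1}^N f(\lambda_i) \le f\!\left(\frac{1}{N}\sum_{i=1}^N \lambda_i\right) = f\!\left(\frac1N\right) = \frac{\log N}{N},
\]
and multiplying by $N$ yields $S(\psi)\le \log N$. For the equality case, I will use that $f''(t) = -1/t < 0$ on $(0,\infty)$, so $f$ is strictly concave on $[0,\infty)$; strict concavity survives at the endpoint $0$ because $f$ is continuous there. Equality in Jensen's inequality for a strictly concave function forces all $\lambda_i$ to be equal, hence $\lambda_i = 1/N$. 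Conversely, if all $\lambda_i = 1/N$, direct substitution gives $S(\psi) = \log N$.

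An alternative route avoids Jensen and uses Gibbs' inequality instead. Apply $\log u \le u-1$ with $u = 1/(N\lambda_i)$ over the indices with $\lambda_i>0$. This gives $\sum_i \lambda_i \log\frac{1}{N\lambda_i} \le 0$, which rearranges to $S(\psi)\le\log N$. Equality in $\log u \le u-1$ holds only at $u=1$. If some $\lambda_i$ were $0$, the inequality step $\sum_{\lambda_i>0} 1/N - 1 \le 0$ would be strict, so equality also rules out zero eigenvalues.

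The main obstacle is this equality characterization, and specifically the handling of zero eigenvalues. I intend to use the Gibbs form for that step, since it tracks the zero eigenvalues explicitly. Nothing graph-specific is needed: the conclusion depends only on $\Psi\Psi^*$ being an $N\times N$ density matrix.
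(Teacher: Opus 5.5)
Your proof is correct. It is also more self-contained than the paper's treatment, which gives no argument of its own and simply quotes Nielsen--Chuang, Theorem~11.8. There the upper bound and its equality case come from Klein's inequality, $0 \le S(\rho\,\|\,I/d) = \log d - S(\rho)$ with equality iff $\rho = I/d$. This is applied to the reduced density matrix, which the paper identifies with $\Psi\Psi^*$ via $\tr_2(\psi\psi^*) = \Psi\Psi^*$.

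Your Gibbs computation is the classical version of the same argument. Because $I/N$ commutes with $\Psi\Psi^*$, the quantum relative entropy reduces to the Kullback--Leibler divergence of the eigenvalue vector from the uniform distribution. The inequality $\log u \le u-1$ is the standard proof that this divergence is nonnegative.

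Your route has two advantages. First, it works directly from the paper's definition \eqref{0428-1}, with no appeal to partial traces or quantum relative entropy. Second, it gives the equality condition in exactly the eigenvalue form stated in the lemma; the cited form $\rho = I/d$ is what the paper then recovers as Corollary~\ref{0507-2}. Your handling of zero eigenvalues is the clean way to close the equality case: when only $k<N$ eigenvalues are positive, $\sum_{\lambda_i>0} 1/N - 1 = k/N - 1 < 0$.

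The Jensen variant is also valid. If you keep it, the phrase ``strict concavity survives at $0$ by continuity'' deserves a one-line check: for $y>0$ and $\theta\in(0,1)$, $f((1-\theta)y) - (1-\theta)f(y) = -(1-\theta)y\log(1-\theta) > 0$.
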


Since $\Psi \Psi^*$ is Hermitian,
its diagonalization allows us to reformulate the condition for attaining the upper bound of the entanglement entropy as follows.

\begin{cor} \label{0507-2}
Let $\psi$ be a quantum state in the two-particle Hilbert space $\MC{H}^{(2)}(G)$, and let $\Psi$ be the amplitude matrix of $\psi$. Then, $S(\psi)=\log |\MC{A}(G)|$ if and only if
\[ \Psi \Psi^* = \frac{1}{|\MC{A}(G)|} I. \]
\end{cor}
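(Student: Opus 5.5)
The statement to prove is Corollary~\ref{0507-2}: $S(\psi)=\log|\MC{A}(G)|$ if and only if $\Psi\Psi^* = \frac{1}{|\MC{A}(G)|}I$. I would derive it directly from the preceding lemma (Nielsen--Chuang, Theorem~11.8), which says that equality in the upper bound holds exactly when every eigenvalue of $\Psi\Psi^*$, counted with multiplicity, equals $1/|\MC{A}(G)|$. The only remaining task is to translate this spectral condition into a matrix identity, and the tool for that is the spectral theorem for the Hermitian matrix $\Psi\Psi^*$.

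Set $N := |\MC{A}(G)|$. Since $\Psi\Psi^*$ is Hermitian (indeed positive semidefinite, as noted in Section~\ref{0602-3}), there is a unitary $Q \in \MB{C}^{\MC{A}\times\MC{A}}$ with $\Psi\Psi^* = Q D Q^*$, where $D$ is the diagonal matrix of the eigenvalues listed with multiplicity.

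For the forward direction, assume $S(\psi)=\log N$. The lemma gives that every diagonal entry of $D$ is $1/N$, so $D=\frac1N I$. Hence
\[ \Psi\Psi^* = Q\left(\tfrac1N I\right)Q^* = \tfrac1N QQ^* = \tfrac1N I. \]

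For the converse, assume $\Psi\Psi^*=\frac1N I$. Its spectrum is then $\{[1/N]^N\}$, so every eigenvalue equals $1/N$, and the equality clause of the lemma gives $S(\psi)=\log N$. As a sanity check, one can also compute directly from~\eqref{0428-1}:
\[ S(\psi) = -N\cdot\tfrac1N\log\tfrac1N = \log N. \]

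There is no substantial obstacle in this argument. The one point to state explicitly is that the eigenvalue condition in the lemma is imposed on the full multiset $\Spec(\Psi\Psi^*)$, including multiplicities. That is what forces $D=\frac1N I$ in the forward direction, and it is what makes diagonalizability, which is guaranteed by Hermiticity, enough to conclude.
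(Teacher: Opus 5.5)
Your proposal is correct and follows essentially the same route as the paper, which also combines the equality clause of the Nielsen--Chuang bound with the unitary diagonalization of the Hermitian matrix $\Psi\Psi^*$. Your caveat about multiplicities is harmless but not needed: every diagonal entry of $D$ is an eigenvalue, so ``all eigenvalues equal $1/N$'' already forces $D=\frac{1}{N}I$.
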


We are interested in graphs for which our two-particle Grover walk
produces a state whose entanglement entropy attains the upper bound.
Attaining the upper bound of the entanglement entropy implies that the quantum state is maximally entangled.
Such a state may play an important role in quantum information processing, including quantum teleportation.

\section{Two-particle Grover walks on complete bipartite graphs $K_{n,n}$}

In this section, we study the entanglement entropy
for two-particle Grover walks on complete bipartite graphs.
In general,
when a two-particle Grover walk evolves from a given initial state,
it is not necessarily easy to describe explicitly the amplitude matrix $\Psi$ defined by~\eqref{0504-1} at each time step. 
Accordingly, in this paper,
we consider two-particle Grover walks on complete bipartite graphs
as a trial example and study their entanglement entropy.
As we will see later,
the Kronecker product of a complete bipartite graph with itself is again a complete bipartite graph.
Furthermore, since the one-particle Grover walk on a complete bipartite graph is periodic, our two-particle Grover walk is also periodic.
Taking this fact into account,
we can compute the entanglement entropy explicitly
by directly calculating the time evolution at several specific times.

\begin{lem} \label{0505-1}
For the complete bipartite graph $K_{n,n}$,
the Kronecker product of $K_{n,n}$ with itself
is isomorphic to the disjoint union of two copies of $K_{n^2, n^2}$.
That is,
\[ K_{n,n} \otimes K_{n,n} \simeq K_{n^2, n^2} \cup K_{n^2, n^2}. \]
\end{lem}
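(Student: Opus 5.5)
We give an explicit description of the vertex and edge sets of $K_{n,n} \otimes K_{n,n}$ and split them into two parts. Write $X = \{x_1,\dots,x_n\}$ and $Y = \{y_1,\dots,y_n\}$ for the two parts of $K_{n,n}$. The vertex set $V(K_{n,n}) \times V(K_{n,n})$ of the Kronecker product splits into four blocks:
\[ X\times X,\quad Y\times Y,\quad X\times Y,\quad Y\times X, \]
each of size $n^2$. By the definition of the Kronecker product, $(u_1,v_1)$ and $(u_2,v_2)$ are adjacent if and only if $\{u_1,u_2\}$ and $\{v_1,v_2\}$ are both edges of $K_{n,n}$. In $K_{n,n}$, two vertices are adjacent exactly when they lie in different parts.

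First we determine all adjacencies. Take $(u_1,v_1) \in X\times X$. Its neighbours $(u_2,v_2)$ must satisfy $u_2\in Y$ and $v_2 \in Y$, and every such pair is a neighbour. So $X\times X$ is completely joined to $Y\times Y$ and to nothing else. In the same way, a vertex of $X\times Y$ is adjacent precisely to every vertex of $Y\times X$, and vice versa. No edge lies inside a single block, because an edge inside a block would require an edge of $K_{n,n}$ between two vertices of the same part.

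Next we assemble the components. The subgraph induced on $(X\times X)\cup(Y\times Y)$ has bipartition $\{X\times X,\ Y\times Y\}$ with all $n^2\cdot n^2$ cross edges present, so it is $K_{n^2,n^2}$. Likewise the subgraph on $(X\times Y)\cup(Y\times X)$ is $K_{n^2,n^2}$. There are no edges between these two vertex sets, so the whole graph is their disjoint union. To make the isomorphism explicit, we label $(x_i,x_j)$ and $(y_i,y_j)$ as the two sides of the first copy, and $(x_i,y_j)$ and $(y_i,x_j)$ as the two sides of the second copy.

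There is no real obstacle in this argument. The only care needed is the case check that no edges occur within a block or between the two groups of blocks, and this follows directly from the bipartiteness of $K_{n,n}$.
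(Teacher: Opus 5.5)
Your proof is correct, but it takes a genuinely different route from the paper's.

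\textbf{The paper's route.} The paper argues spectrally, in these steps:
\begin{enumerate}
\item From $\Spec(K_{n,n}) = \{[n]^1,[0]^{2n-2},[-n]^1\}$ it obtains $\Spec(K_{n,n}\otimes K_{n,n}) = \{[n^2]^2,[0]^{4n^2-4},[-n^2]^2\}$.
\item It uses $n^2$-regularity to conclude, from the multiplicity of $n^2$, that there are exactly two components.
\item It uses the vanishing trace of each component's adjacency matrix to see that each component has $n^2$ and $-n^2$ as simple eigenvalues and all other eigenvalues $0$, hence rank $2$.
\item It invokes a result of Cheng and Liu characterizing graphs of rank $2$, so each component is complete bipartite.
\item It uses regularity to force both sides of each component to have $n^2$ vertices.
\end{enumerate}

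\textbf{What your route buys.} Your block decomposition into $X\times X$, $Y\times Y$, $X\times Y$, $Y\times X$ replaces all of this, including the external citation, with a short self-contained check. The paper describes a direct proof as ``rather involved,'' but your argument shows it is actually the simpler one.

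It also gives more. You get an explicit isomorphism identifying the components as $(X\times X)\cup(Y\times Y)$ and $(X\times Y)\cup(Y\times X)$, whereas the spectral proof only yields an abstract isomorphism. The later computations of $\psi_1^{\pm}$, $\psi_2^{\pm}$, $\psi_3^{\pm}$ rely implicitly on exactly this description. They use that the neighbours of $(y_1,x_1)$ are precisely the $n^2$ vertices $(x_i,y_j)$, and symmetrically for $(x_1,y_1)$. Consequently the initial states live entirely on the component $(X\times Y)\cup(Y\times X)$.

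\textbf{What the paper's route buys.} Its main merit is that it illustrates algebraic-graph-theoretic tools. The cost is dependence on the rank-$2$ characterization.
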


\begin{proof}
Although a direct proof based on the definition of the adjacency relation of the Kronecker product is possible, it is rather involved,
so we prove the statement via the eigenvalues of the adjacency matrix.
See Chapter~1 of~\cite{brouwer2011spectra} for basic facts concerning graph spectra.
Since $\Spec(K_{n,n}) = \{ [n]^{1}, [0]^{2n-2}, [-n]^{1} \}$,
we have
\[ \Spec(K_{n,n} \otimes K_{n,n}) =  \left \{ [n^2]^{2}, [0]^{4n^2 - 4}, [-n^2]^{2} \right \}. \]
Moreover, the graph $K_{n,n} \otimes K_{n,n}$ is $n^2$-regular.
Since the multiplicity of the eigenvalue $n^2$ is $2$,
it follows that $K_{n,n} \otimes K_{n,n}$ has two connected components, say $G_1$ and $G_2$.
Each component is connected and $n^2$-regular,
and hence has $n^2$ as an eigenvalue with multiplicity one.
On the other hand, since the eigenvalues of the adjacency matrix of each component sum to $0$,
each component must also have $-n^2$ as an eigenvalue with multiplicity one,
and all its remaining eigenvalues are $0$.
In particular, $\rank(A(G_1)) = \rank(A(G_2)) = 2$.
Therefore, both $G_1$ and $G_2$ must be complete bipartite graphs~\cite[Lemma~3.2]{cheng2007nullity}.
Suppose that $G_1 = K_{p,q}$.
Since $G_1$ is $n^2$-regular, we obtain $p=q=n^2$.
The same holds for $G_2$.
Hence, we see that $K_{n,n} \otimes K_{n,n}$ is isomorphic to $K_{n^2, n^2} \cup K_{n^2, n^2}$.
\end{proof}

By the above lemma,
the two-particle Grover walk on the complete bipartite graph $K_{n,n}$ evolves, in effect, governed by the one-particle Grover walk on $K_{n^2, n^2}$.
For the computations below,
we fix the following labeling of $K_{n,n}$:
\begin{align*}
V(K_{n,n}) &:= \{ x_1, \dots, x_n, y_1, \dots, y_n \}, \\
E(K_{n,n}) &:= \{ \{x_i, y_j\} \mid i,j \in \{1, \dots,n\} \}.
\end{align*}
We note that the valency of $K_{n^2, n^2}$ is $n^2$
and that $|\MC{A}(K_{n,n})| = 2n^2$.

Let $U = U(K_{n,n} \otimes K_{n,n})$ be the time evolution operator of the one-particle Grover walk.
Following~\eqref{0423-1},
we define the time evolution operator $\MC{U} = \MC{U}(K_{n,n})$ of the two-particle Grover walk on $K_{n,n}$ by $\MC{U} := R^{-1}UR$.
Using the edge $\{x_1, y_1\}$, define the initial states
\begin{equation} \label{0507-4}
\psi_0^{\pm} := \frac{1}{\sqrt{2}} \left\{ \BM{e}_{(x_1,y_1)} \otimes \BM{e}_{(y_1,x_1)} \pm \BM{e}_{(y_1,x_1)} \otimes \BM{e}_{(x_1,y_1)} \right \}.
\end{equation}
For each choice of sign, we define $\psi_{\tau}^{\pm} := {\MC{U}}^{\tau} \psi_0^{\pm}$ for a non-negative integer $\tau$.
Theorem~1.3 in~\cite{higuchi2017periodicity}, together with Lemma~\ref{0505-1}, shows that the one-particle Grover walk on a complete bipartite graph with at least three vertices is $4$-periodic, that is, $U^4 = I$.
This implies that the two-particle Grover walk on $K_{n,n}$ is also $4$-periodic for $n \geq 2$, that is, $\MC{U}^4 = I$.
Therefore, it is sufficient to compute $\psi_{\tau}^{\pm}$ for $\tau = 0,1,2,3$.
For these states,
we denote the corresponding amplitude matrices by $\Psi_{\tau}^{\pm}$.
First, we compute the states up to time $2$.

\begin{lem} \label{0520-1}
We have
\begin{align*}
\psi_1^{\pm} &= 
\frac{\sqrt{2}}{n^2} \left\{
\sum_{i,j=1}^n \BM{e}_{(y_1, x_i)} \otimes \BM{e}_{(x_1, y_j)}
\pm \sum_{i,j=1}^n \BM{e}_{(x_1, y_i)} \otimes \BM{e}_{(y_1, x_j)} \right\} \\
& \hspace{8em}
- \frac{1}{\sqrt{2}} \BM{e}_{(y_1, x_1)} \otimes \BM{e}_{(x_1, y_1)}
\mp \frac{1}{\sqrt{2}} \BM{e}_{(x_1, y_1)} \otimes \BM{e}_{(y_1, x_1)},
\intertext{and}
\psi_2^{\pm}
&= \frac{\sqrt{2}}{n^2} \Biggl(
\frac{2}{n^2} \sum_{i,j=1}^n \sum_{k, \ell = 1}^{n} \BM{e}_{(x_i, y_k)} \otimes \BM{e}_{(y_j, x_{\ell})} - \sum_{i,j=1}^n \BM{e}_{(x_i, y_1)} \otimes \BM{e}_{(y_j, x_1)} \\
& \hspace{2em} \pm \frac{2}{n^2} \sum_{i,j=1}^n \sum_{k, \ell = 1}^{n} \BM{e}_{(y_i, x_k)} \otimes \BM{e}_{(x_j, y_{\ell})} \mp \sum_{i,j=1}^n \BM{e}_{(y_i, x_1)} \otimes \BM{e}_{(x_j, y_1)} \Biggr) \\
& \hspace{4em} \mp \frac{\sqrt{2}}{n^2} \left(
\sum_{i,j=1}^n \BM{e}_{(y_1, x_i)} \otimes \BM{e}_{(x_1, y_j)}
\pm \sum_{i,j=1}^n \BM{e}_{(x_1, y_i)} \otimes \BM{e}_{(y_1, x_j)} \right) \\
& \hspace{6em} +\frac{1}{\sqrt{2}} \BM{e}_{(x_1, y_1)} \otimes \BM{e}_{(y_1, x_1)} \pm \frac{1}{\sqrt{2}} \BM{e}_{(y_1, x_1)} \otimes \BM{e}_{(x_1, y_1)}.
\end{align*}
\end{lem}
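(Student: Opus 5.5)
The plan is a direct computation: transport each state to $\MB{C}^{\MC{A}(K_{n,n}\otimes K_{n,n})}$ via $R$, apply Lemma~\ref{0507-1} for the one-particle Grover walk on $K_{n,n}\otimes K_{n,n}$, and pull back with $R^{-1}$. Throughout I use that $K_{n,n}\otimes K_{n,n}$ is $n^2$-regular, so every coefficient $2/\deg$ equals $2/n^2$. By linearity, and since $\MC{U}$ commutes with $P$ (Theorem~\ref{0527-2}), it is enough to track the single vector $\BM{e}_{(x_1,y_1)}\otimes\BM{e}_{(y_1,x_1)}$. The $\pm$ partner of each resulting term is then its image under $P$.

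For $\psi_1^{\pm}$ the steps are as follows.
\begin{enumerate}
\item Compute $R(\BM{e}_{(x_1,y_1)}\otimes\BM{e}_{(y_1,x_1)})=\BM{e}_{(x_1,y_1),(y_1,x_1)}$. This is an arc of $G\otimes G$ with terminus $(y_1,x_1)$.
\item The arcs of $G\otimes G$ leaving $(y_1,x_1)$ are exactly $((y_1,x_1),(x_i,y_j))$ for $i,j\in\{1,\dots,n\}$.
\item The reverse of the original arc is $((y_1,x_1),(x_1,y_1))$.
\item Lemma~\ref{0507-1} therefore gives $U\BM{e}_{(x_1,y_1),(y_1,x_1)}=\frac{2}{n^2}\sum_{i,j}\BM{e}_{(y_1,x_1),(x_i,y_j)}-\BM{e}_{(y_1,x_1),(x_1,y_1)}$.
\item Applying $R^{-1}$ turns $\BM{e}_{(y_1,x_1),(x_i,y_j)}$ into $\BM{e}_{(y_1,x_i)}\otimes\BM{e}_{(x_1,y_j)}$ and the reverse term into $\BM{e}_{(y_1,x_1)}\otimes\BM{e}_{(x_1,y_1)}$.
\item Multiply by $1/\sqrt2$ and add the $P$-image with sign $\pm$. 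This yields the stated $\psi_1^{\pm}$, since $\frac{1}{\sqrt2}\cdot\frac{2}{n^2}=\frac{\sqrt2}{n^2}$.
\end{enumerate}

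For $\psi_2^{\pm}$ I apply $\MC{U}$ once more to each term of $\psi_1^{\pm}$, again only for the first family, and recover the rest by $P$.

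The term $\BM{e}_{(y_1,x_i)}\otimes\BM{e}_{(x_1,y_j)}$ corresponds to the arc $((y_1,x_1),(x_i,y_j))$, whose terminus is $(x_i,y_j)$.
\begin{itemize}
\item The arcs leaving $(x_i,y_j)$ are $((x_i,y_j),(y_k,x_\ell))$, which pull back under $R^{-1}$ to $\BM{e}_{(x_i,y_k)}\otimes\BM{e}_{(y_j,x_\ell)}$, each with coefficient $2/n^2$.
\item The reverse arc pulls back to $\BM{e}_{(x_i,y_1)}\otimes\BM{e}_{(y_j,x_1)}$, with coefficient $-1$.
\item Summing over $i,j$ and multiplying by $\sqrt2/n^2$ gives the first two sums in the big bracket.
\end{itemize}

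The term $-\frac1{\sqrt2}\BM{e}_{(y_1,x_1)}\otimes\BM{e}_{(x_1,y_1)}$ is the $P$-partner case of step~1. It is therefore handled by the $\psi_1$ computation with roles swapped: it produces $-\frac{\sqrt2}{n^2}\sum\BM{e}_{(x_1,y_i)}\otimes\BM{e}_{(y_1,x_j)}$ plus $+\frac{1}{\sqrt2}\BM{e}_{(x_1,y_1)}\otimes\BM{e}_{(y_1,x_1)}$.

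Finally I collect all contributions with their signs. The signs combine as $\pm\cdot\pm=+$; the leftover $(-1)(\pm)$ factors are what produce the $\mp$ signs in the statement. Collecting gives exactly the displayed $\psi_2^{\pm}$.

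The main obstacle is purely bookkeeping. One has to be careful with the $R$ relabeling, which swaps the inner coordinates, and with which sign attaches to which family after combining the $\pm$ terms. Here the commutation with $P$ keeps this under control, because every term is paired with its swapped partner with the same relative sign.
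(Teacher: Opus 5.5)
Your proposal is correct and matches the paper's proof in substance. Both transport the state via $R$, apply Lemma~\ref{0507-1} on the $n^2$-regular graph $K_{n,n}\otimes K_{n,n}$, and pull back with $R^{-1}$. The only difference is bookkeeping: you use $\MC{U}P=P\MC{U}$ (Theorem~\ref{0527-2}) to track only $\BM{e}_{(x_1,y_1)}\otimes\BM{e}_{(y_1,x_1)}$ and recover the partner family by applying $P$. The paper instead computes both families directly, and for $\psi_2^{\pm}$ it recognizes the reverse-arc part of $R\psi_1^{\pm}$ as $\mp R\psi_0^{\pm}$, whose image is the already-known $\mp R\psi_1^{\pm}$.
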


\begin{proof}
First, we compute the state $\psi_1^{\pm}$.
We have
\[ \psi_1^{\pm} = \MC{U} \psi_0^{\pm} = R^{-1}UR \psi_0^{\pm}
= R^{-1}U \frac{1}{\sqrt{2}} \left\{ \BM{e}_{(x_1,y_1), (y_1,x_1)} \pm \BM{e}_{(y_1,x_1), (x_1,y_1)} \right \}. \]
Thus, it follows from Lemma~\ref{0507-1} that
\begin{align}
R \psi_1^{\pm}
&= U \frac{1}{\sqrt{2}} \left\{ \BM{e}_{(x_1,y_1), (y_1,x_1)} \pm \BM{e}_{(y_1,x_1), (x_1,y_1)} \right \} \\
&= \frac{1}{\sqrt{2}} \left\{ \frac{2}{n^2} \sum_{i,j=1}^n \BM{e}_{(y_1, x_1),(x_i, y_j)} - \BM{e}_{(y_1, x_1),(x_1, y_1)} \pm \frac{2}{n^2} \sum_{i,j=1}^n \BM{e}_{(x_1, y_1),(y_i, x_j)} \mp \BM{e}_{(x_1, y_1),(y_1, x_1)}
\right \} \notag \\
&= \frac{\sqrt{2}}{n^2} \left\{
\sum_{i,j=1}^n \BM{e}_{(y_1, x_1),(x_i, y_j)}
\pm \sum_{i,j=1}^n \BM{e}_{(x_1, y_1),(y_i, x_j)} \right\}
- \frac{1}{\sqrt{2}} \BM{e}_{(y_1, x_1),(x_1, y_1)}
\mp \frac{1}{\sqrt{2}} \BM{e}_{(x_1, y_1),(y_1, x_1)}. \label{0507-3}
\end{align}
Hence, the assertion follows for the state $\psi_1^{\pm}$.

Next, we compute the state $\psi_2^{\pm}$.
For this purpose, it is convenient to rewrite $R \psi_1^{\pm}$ in~\eqref{0507-3} as
\[ R\psi_1^{\pm} = 
\frac{\sqrt{2}}{n^2} \left(
\sum_{i,j=1}^n \BM{e}_{(y_1, x_1), (x_i, y_j)}
\pm \sum_{i,j=1}^n \BM{e}_{(x_1, y_1), (y_i, x_j)} \right) \mp R \psi_0^{\pm}. \]
We have
\begin{align*}
R\psi_2^{\pm}
&= \frac{\sqrt{2}}{n^2} \left(
\sum_{i,j=1}^n U \BM{e}_{(y_1, x_1), (x_i, y_j)}
\pm \sum_{i,j=1}^n U \BM{e}_{(x_1, y_1), (y_i, x_j)} \right) \mp R\psi_1^{\pm} \\
&= \frac{\sqrt{2}}{n^2} \Biggl\{
\sum_{i,j=1}^n \left( \frac{2}{n^2} \sum_{k, \ell = 1}^{n} \BM{e}_{(x_i, y_j), (y_k, x_{\ell})} - \BM{e}_{(x_i, y_j), (y_1, x_1)}
\right) \\
& \hspace{4em} \pm \sum_{i,j=1}^n \left( \frac{2}{n^2} \sum_{k, \ell = 1}^{n} \BM{e}_{(y_i, x_j), (x_k, y_{\ell})} - \BM{e}_{(y_i, x_j), (x_1, y_1)}
\right)
\Biggr\} \mp R\psi_1^{\pm} \\
&= \frac{\sqrt{2}}{n^2} \Biggl(
\frac{2}{n^2} \sum_{i,j=1}^n \sum_{k, \ell = 1}^{n} \BM{e}_{(x_i, y_j), (y_k, x_{\ell})} - \sum_{i,j=1}^n \BM{e}_{(x_i, y_j), (y_1, x_1)} \\
& \hspace{2em} \pm \frac{2}{n^2} \sum_{i,j=1}^n \sum_{k, \ell = 1}^{n} \BM{e}_{(y_i, x_j), (x_k, y_{\ell})} \mp \sum_{i,j=1}^n \BM{e}_{(y_i, x_j), (x_1, y_1)} \Biggr) \\
& \hspace{4em} \mp \frac{\sqrt{2}}{n^2} \left(
\sum_{i,j=1}^n \BM{e}_{(y_1, x_1),(x_i, y_j)}
\pm \sum_{i,j=1}^n \BM{e}_{(x_1, y_1),(y_i, x_j)} \right) \\
& \hspace{6em} + \frac{1}{\sqrt{2}} \BM{e}_{(x_1, y_1),(y_1, x_1)} \pm
\frac{1}{\sqrt{2}} \BM{e}_{(y_1, x_1),(x_1, y_1)}.
\end{align*}
Hence, the assertion follows for the state $\psi_2^{\pm}$.
\end{proof}

Next, we deal with the exceptional case $n=1$.

\begin{lem} \label{0520-2}
For the two-particle Grover walk on $K_{1,1}$ starting from the initial states defined in~\eqref{0507-4}, the entanglement entropy attains its upper bound at every time.
\end{lem}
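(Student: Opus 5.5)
For $n=1$ the graph $K_{1,1}$ has a single edge $\{x_1,y_1\}$, so $\MC{A}(K_{1,1}) = \{a, a^{-1}\}$ with $a=(x_1,y_1)$, and $|\MC{A}| = 2$. The upper bound of the entanglement entropy is therefore $\log 2$. By Corollary~\ref{0507-2}, it suffices to show that the $2\times 2$ amplitude matrix $\Psi_\tau^{\pm}$ satisfies $\Psi_\tau^{\pm}(\Psi_\tau^{\pm})^* = \frac12 I$ for every $\tau \geq 0$. Equivalently, $\sqrt{2}\,\Psi_\tau^{\pm}$ must be unitary.

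The plan is to compute the dynamics explicitly. By Lemma~\ref{0505-1} with $n=1$, $K_{1,1}\otimes K_{1,1}$ is two copies of $K_{1,1}$, i.e.\ two disjoint edges. On a single edge every vertex has degree $1$. Lemma~\ref{0507-1} then gives $U\BM{e}_z = 2\BM{e}_{z^{-1}} - \BM{e}_{z^{-1}} = \BM{e}_{z^{-1}}$, so the Grover walk simply reverses arcs. Under $R$, the basis vector $\BM{e}_b\otimes\BM{e}_c$ corresponds to the arc $r(b,c)$, whose inverse is $r(b^{-1},c^{-1})$. Hence
\[
\MC{U}(\BM{e}_b \otimes \BM{e}_c) = \BM{e}_{b^{-1}} \otimes \BM{e}_{c^{-1}},
\]
that is, $\MC{U} = X\otimes X$, where $X$ is the swap matrix of the two arcs. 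This is consistent with the $n=1$ case of Lemma~\ref{0520-1}, since then $\frac{\sqrt2}{n^2}-\frac1{\sqrt2}=\frac1{\sqrt2}$.

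Next I compute the states. We get $\psi_1^{\pm} = \frac1{\sqrt2}\{\BM{e}_{a^{-1}}\otimes \BM{e}_{a} \pm \BM{e}_{a}\otimes\BM{e}_{a^{-1}}\} = \pm\psi_0^{\pm}$. By induction, $\psi_\tau^{\pm} = (\pm1)^\tau \psi_0^{\pm}$ for all $\tau$. Alternatively, $\Psi_\tau = X^\tau \Psi_0 (X^\tau)^{T}$, which is all that is needed.

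Finally, $\Psi_0^{\pm} = \frac1{\sqrt2}\begin{bmatrix}0&1\\ \pm1&0\end{bmatrix}$, so $\Psi_0^{\pm}(\Psi_0^{\pm})^* = \frac12 I$; this is the computation~\eqref{0602-2} with $S=\log 2=\log|\MC{A}|$. Multiplying by the unitary $X^\tau$ on each side preserves this identity, so the bound is attained at every time. There is no real obstacle here. The only point to check carefully is that Lemma~\ref{0507-1} with degree $1$ indeed yields pure reversal, and that $R$ intertwines reversal on $G\otimes G$ with componentwise reversal.
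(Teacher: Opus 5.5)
Your proof is correct and follows essentially the same route as the paper: both obtain $\psi_1^{\pm}=\pm\psi_0^{\pm}$, deduce $\psi_\tau^{\pm}=(\pm1)^\tau\psi_0^{\pm}$ by induction, and then check $\Psi_\tau^{\pm}(\Psi_\tau^{\pm})^*=\frac12 I$ via Corollary~\ref{0507-2}. The only difference is how you get $\psi_1^{\pm}$: you derive $\MC{U}=X\otimes X$ directly from the degree-one structure of $K_{1,1}\otimes K_{1,1}$, whereas the paper specializes Lemma~\ref{0520-1} to $n=1$; your route also shows that the walk on $K_{1,1}$ is a product of local unitaries, which explains why the entropy stays constant at $\log 2$.
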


\begin{proof}
First, we prove by induction on $\tau$ that
\[ \psi_{\tau}^{\pm} = (\pm 1)^{\tau} \psi_{0}^{\pm} \]
holds for any time $\tau$.
Indeed, the assertion is trivial when $\tau = 0$.
We assume that the equality holds up to time $\tau - 1$.
By Lemma~\ref{0520-1}, we can verify that
\begin{align*}
\psi_{\tau}^{\pm} &= \MC{U} \psi_{\tau - 1}^{\pm}
= \MC{U} (\pm 1)^{\tau-1} \psi_{0}^{\pm} = (\pm 1)^{\tau-1} \psi_{1}^{\pm} \\
&= (\pm 1)^{\tau-1} \Big(
\sqrt{2} \BM{e}_{(y_1, x_1)} \otimes \BM{e}_{(x_1,y_1)}
\pm \sqrt{2} \BM{e}_{(x_1, y_1)} \otimes \BM{e}_{(y_1,x_1)} \\
& \hspace{8em} - \frac{1}{\sqrt{2}} \BM{e}_{(y_1, x_1)} \otimes \BM{e}_{(x_1,y_1)}
\mp \frac{1}{\sqrt{2}} \BM{e}_{(x_1, y_1)} \otimes \BM{e}_{(y_1,x_1)}
\Big) \\
&= (\pm 1)^{\tau-1} \frac{1}{\sqrt{2}} \left(
\BM{e}_{(y_1, x_1)} \otimes \BM{e}_{(x_1,y_1)}
\pm \BM{e}_{(x_1, y_1)} \otimes \BM{e}_{(y_1,x_1)} \right) \\
&= (\pm 1)^{\tau} \psi_{0}^{\pm}.
%
\end{align*}
Thus, the corresponding amplitude matrix is
\[ \Psi_{\tau}^{\pm} =
\begin{bmatrix} 
0&(\pm1)^{\tau}/\sqrt{2} \\ 
(\pm1)^{\tau+1}/\sqrt{2}&0
\end{bmatrix} \]
and hence we have
\[ \Psi_{\tau}^{\pm} (\Psi_{\tau}^{\pm})^*
= \begin{bmatrix} 
1/2&0 \\ 
0&1/2
\end{bmatrix}
= \frac{1}{2} I = \frac{1}{|\MC{A}(K_{1,1})|}I. \]
By Corollary~\ref{0507-2},
the entanglement entropy attains its upper bound at every time $\tau$.
\end{proof}

For $n \neq 1$,
the only case in which the entanglement entropy attains its upper bound at some time is $n=2$.
As shown below, if $n \neq 1,2$,
then the entanglement entropy does not attain its upper bound at any time.

\begin{lem} \label{0520-3}
For the two-particle Grover walk on $K_{2,2}$ starting from the initial states defined in~\eqref{0507-4}, the entanglement entropy attains its upper bound at time $\tau \equiv 2 \pmod{4}$.
\end{lem}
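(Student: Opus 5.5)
Since $\MC{U}^4 = I$ for $n \geq 2$, it suffices to show that at $\tau = 2$ the amplitude matrix $\Psi_2^{\pm}$ satisfies $\Psi_2^{\pm}(\Psi_2^{\pm})^* = \frac{1}{8} I$. By Corollary~\ref{0507-2}, with $|\MC{A}(K_{2,2})| = 8$, this gives the upper bound at time $2$, and periodicity then gives it at every $\tau \equiv 2 \pmod 4$. (The statement as phrased only needs the "attains at $\tau\equiv 2$" direction; the failure at $\tau \equiv 0,1,3$ is presumably handled together with the case $n\ge 3$.) The plan is to specialize the formula for $\psi_2^{\pm}$ in Lemma~\ref{0520-1} to $n=2$ and read off $\Psi_2^{\pm}$ as an explicit $8\times 8$ matrix.

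Order the arcs as $X := \{(x_i,y_k)\}$ (four arcs from the $x$-side) followed by $Y := \{(y_j,x_\ell)\}$ (four arcs from the $y$-side). For $n=2$ we have $\frac{\sqrt2}{n^2} = \frac{\sqrt2}{4}$ and $\frac{2}{n^2} = \frac12$. Every term of $\psi_2^{\pm}$ is of the form $\BM{e}_a\otimes\BM{e}_b$ with $a\in X, b\in Y$ or $a\in Y, b\in X$. Hence
\[
\Psi_2^{\pm} = \begin{bmatrix} 0 & B \\ \pm C & 0\end{bmatrix},
\]
where $B$ and $C$ are $4\times 4$ blocks, and the $\pm$ symmetry of Lemma~\ref{0520-1} should give $C = B^{\top}$ up to the sign bookkeeping. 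It follows that $\Psi_2^{\pm}(\Psi_2^{\pm})^* = \diag(BB^*, CC^*)$, so it suffices to show $BB^* = CC^* = \frac18 I_4$, i.e.\ that $\sqrt8\,B$ is orthogonal.

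Next, compute $B$ entrywise from the $X\otimes Y$ terms. For $a=(x_i,y_k)$ and $b=(y_j,x_\ell)$ we get
\[
B_{a,b} = \frac{\sqrt2}{4}\Bigl(\tfrac12 - [k=1][\ell=1]\Bigr) \mp\frac{\sqrt2}{4}\,[i=1][j=1] \pm \frac{1}{\sqrt2}\,[a=(x_1,y_1)][b=(y_1,x_1)],
\]
where the cross term is nonzero only if $a=(x_1,y_k)$ and $b=(y_1,x_\ell)$, with the sign taken from the $\mp\frac{\sqrt2}{n^2}(\cdots)$ line. The trickiest part is this sign bookkeeping, tracking which $\pm/\mp$ combinations multiply to $+$ or $-$. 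I would carefully recheck it against Lemma~\ref{0520-1}, and I would sanity-check by verifying that the total squared norm of $\Psi_2^{\pm}$ equals $1$.

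Finally, the entries of $\sqrt8\,B$ should all come out as $\pm\frac12$, a Hadamard-like pattern. Verifying that the rows are orthonormal is then a finite $4\times4$ check, and the same check applies to $C$. The main obstacle is confirming this orthogonality, since it hinges on exact cancellation at $n=2$. If it looks likely to fail, I would instead exploit the rank structure: write $B = \alpha J + \beta\,(\text{indicator outer products}) + \gamma E$, a combination of rank-one pieces in the $k$ and $\ell$ indices, and compute $BB^*$ symbolically.
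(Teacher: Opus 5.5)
Your route is the paper's: specialize Lemma~\ref{0520-1} to $n=2$, read off the $8\times 8$ amplitude matrix, verify $\Psi_2^{\pm}(\Psi_2^{\pm})^*=\frac18 I$, and finish with Corollary~\ref{0507-2} and $4$-periodicity. Your block form $\begin{bmatrix}0&B\\ \pm C&0\end{bmatrix}$ is just an organized version of the paper's ``direct calculation''. You can drop the hedge on $C$: since $P\psi_0^{\pm}=\pm\psi_0^{\pm}$ and $\MC{U}P=P\MC{U}$ (Theorem~\ref{0527-2}), we get $(\Psi_2^{\pm})^{\top}=\pm\Psi_2^{\pm}$, which forces $C=B^{\top}$ exactly.

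However, the entry formula you display for $B$ is wrong for the fermionic state, and with it the key check fails. In the $\mp\frac{\sqrt2}{n^2}(\cdots)$ line of $\psi_2^{\pm}$, the $X\otimes Y$ part is $\mp\frac{\sqrt2}{n^2}\cdot(\pm 1)\sum_{i,j}\BM{e}_{(x_1,y_i)}\otimes\BM{e}_{(y_1,x_j)}$. Its coefficient is $-\frac{\sqrt2}{n^2}$ for \emph{both} signs. Likewise, the only $X\otimes Y$ term among the $\frac{1}{\sqrt2}$ terms is $+\frac{1}{\sqrt2}\BM{e}_{(x_1,y_1)}\otimes\BM{e}_{(y_1,x_1)}$; the $\pm\frac{1}{\sqrt2}$ term lies in the $Y\otimes X$ block. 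So $B$ is sign-independent, as in the paper's explicit matrix, and for $n=2$
\[ 4\sqrt2\,B_{(x_i,y_k),(y_j,x_\ell)}=1-2[k=\ell=1]-2[i=j=1]+4[i=j=k=\ell=1]=(1-2[i=j=1])(1-2[k=\ell=1]). \]
Thus $4\sqrt2\,B=M\otimes M$ with $M=\begin{bmatrix}-1&1\\1&1\end{bmatrix}$ (rows indexed by $(i,k)$, columns by $(j,\ell)$). Since $MM^{\top}=2I$, this gives $BB^*=\frac{1}{32}(2I)\otimes(2I)=\frac18 I$ at once. No delicate cancellation is involved; this is essentially your ``rank structure'' fallback.

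With your signs in the fermionic case you would instead get $(1+2[i=j=1])(1-2[k=\ell=1])$. That matrix has entries $\pm3$, gives $\|B\|_F^2=\frac32$ rather than $\frac12$, and makes $BB^*$ non-scalar. Your planned norm check would expose this, but as written the proposal asserts an incorrect matrix for $\psi_0^-$.
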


\begin{proof}
With respect to the ordering
\[
(x_1,y_1), (x_1,y_2), (x_2,y_1), (x_2,y_2),
(y_1,x_1), (y_1,x_2), (y_2,x_1), (y_2,x_2)
\]
of $\MC{A}(K_{2,2})$,
the amplitude matrix corresponding to the state $\psi_2^{\pm}$ obtained in Lemma~\ref{0520-1} is given by
\[
\Psi_2^{\pm} = \frac{1}{4\sqrt{2}}
\begin{bmatrix}
0&0&0&0&1&-1&-1&1\\
0&0&0&0&-1&-1&1&1\\
0&0&0&0&-1&1&-1&1\\
0&0&0&0&1&1&1&1\\
\pm 1&\mp 1&\mp 1&\pm 1&0&0&0&0\\
\mp 1&\mp 1&\pm 1&\pm 1&0&0&0&0\\
\mp 1&\pm 1&\mp 1&\pm 1&0&0&0&0\\
\pm 1&\pm 1&\pm 1&\pm 1&0&0&0&0
\end{bmatrix}.
\]
A direct calculation shows that
\[
\Psi_2^{\pm}(\Psi_2^{\pm})^* = \frac{1}{8}I = \frac{1}{|\MC{A}(K_{2,2})|}I,
\]
and hence Corollary~\ref{0507-2} implies that the entanglement entropy attains its upper bound at time $\tau = 2$.
Since the complete bipartite graph is $4$-periodic,
the assertion of the lemma follows.
\end{proof}

Our second main result is that, for complete bipartite graphs,
the cases in which the entanglement entropy attains its upper bound are precisely those described above.

\begin{thm} \label{0527-3}
For the two-particle Grover walk on $K_{n,n}$ starting from the initial states defined in~\eqref{0507-4},
the entanglement entropy attains its upper bound at some time if and only if $n = 1,2$.
Moreover, when $n=1$, the entanglement entropy attains its upper bound at every time, while when $n=2$, it attains its upper bound at time $\tau$ if and only if $\tau \equiv 2 \pmod{4}$.
\end{thm}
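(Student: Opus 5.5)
The plan is to combine Lemmas~\ref{0520-2} and \ref{0520-3} with a case analysis over one period. For $n=1$ the statement is exactly Lemma~\ref{0520-2}. For $n\ge 2$ we have $\MC{U}^4=I$, so $\psi_\tau^{\pm}$ depends only on $\tau \bmod 4$. It therefore suffices to show two things. First, for $n\ge 2$ the upper bound $\log(2n^2)$ is never attained at $\tau\equiv 0,1,3 \pmod 4$. Second, at $\tau\equiv 2$ it is attained if and only if $n=2$; the direction ``if'' is Lemma~\ref{0520-3}. Throughout I use Corollary~\ref{0507-2}: attaining the bound forces $\Psi\Psi^*=\frac{1}{2n^2}I$, which in particular forces $\rank\Psi=2n^2$ and every row of $\Psi$ to have squared norm $\frac{1}{2n^2}$.

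\emph{Times $0,1,3$.} At $\tau=0$, \eqref{0602-2} gives $S=\log 2<\log(2n^2)$. At $\tau=1$, Lemma~\ref{0520-1} shows that the nonzero rows of $\Psi_1^{\pm}$ are indexed only by the arcs $(y_1,x_i)$ and $(x_1,y_i)$. Hence $\rank\Psi_1^{\pm}\le 2n<2n^2$, so $\Psi_1^\pm(\Psi_1^\pm)^*$ cannot be a multiple of $I$. At $\tau=3$, write $\psi_3^\pm=\MC{U}^{-1}\psi_0^\pm=R^{-1}U^*R\psi_0^\pm$ and apply Lemma~\ref{0513-1} to $\BM{e}_{(x_1,y_1),(y_1,x_1)}$ and $\BM{e}_{(y_1,x_1),(x_1,y_1)}$. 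The output consists of arcs ending at $(x_1,y_1)$, respectively $(y_1,x_1)$, in $K_{n,n}\otimes K_{n,n}$. Pulled back by $R^{-1}$, these are vectors of the form $\BM{e}_{(y_i,x_1)}\otimes\BM{e}_{(x_j,y_1)}$ and $\BM{e}_{(x_i,y_1)}\otimes\BM{e}_{(y_j,x_1)}$. So only $2n$ columns of $\Psi_3^\pm$ are nonzero, and the same rank argument applies.

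\emph{Time $2$.} I will test a single diagonal entry of $\Psi_2^\pm(\Psi_2^\pm)^*$, namely the squared norm of the row indexed by $(x_2,y_2)$. This row exists for $n\ge2$, and it avoids all the correction terms involving $x_1$ or $y_1$ in the first factor. By Lemma~\ref{0520-1}, its entries lie in the columns $(y_j,x_\ell)$. Each such entry equals $c:=2\sqrt2/n^4$, except when $\ell=1$, where it equals $c-\sqrt2/n^2$. The sign $\pm$ does not affect this row. Its squared norm is therefore
\[ n^2c^2-\frac{2\sqrt2\,n\,c}{n^2}+\frac{2n}{n^4}=\frac{8}{n^6}-\frac{8}{n^5}+\frac{2}{n^3}. \]
Setting this equal to $\frac{1}{2n^2}$ and clearing denominators gives
\[ n^4-4n^3+16n-16=(n-2)^3(n+2)=0. \]
For $n\ge2$ this holds only at $n=2$. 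Hence for $n\ge3$ the bound fails at $\tau\equiv2$, while for $n=2$ Lemma~\ref{0520-3} shows that it holds.

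The main obstacle is the time-$2$ step. One must read off the entries of $\Psi_2^\pm$ in a chosen row correctly from the rather messy formula in Lemma~\ref{0520-1}, and choosing the row $(x_2,y_2)$ is what keeps this tractable. The time-$3$ computation also requires some care, but only the support pattern of $\Psi_3^\pm$ is needed, not its exact coefficients.
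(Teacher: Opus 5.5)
Your overall strategy is sound, but the time-$2$ step as written is wrong and has to be corrected.

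\textbf{The error at $\tau=2$.} In Lemma~\ref{0520-1}, the term $-\frac{\sqrt{2}}{n^2}\sum_{i,j}\BM{e}_{(x_i,y_1)}\otimes\BM{e}_{(y_j,x_1)}$ has first factor $(x_i,y_1)$. So it only touches the rows $(x_i,y_1)$ and never the row $(x_2,y_2)$. As you yourself remarked, that row avoids every correction term. All $n^2$ of its entries, in the columns $(y_j,x_\ell)$, therefore equal $c=2\sqrt{2}/n^4$, with no exceptional value at $\ell=1$.

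The explicit matrix in the proof of Lemma~\ref{0520-3} confirms this. For $n=2$, its row indexed by $(x_2,y_2)$ (the fourth row) is $\frac{1}{4\sqrt{2}}(0,0,0,0,1,1,1,1)$. Your description would instead put $-\frac{1}{4\sqrt{2}}$ in the columns $(y_1,x_1)$ and $(y_2,x_1)$.

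The correct diagonal entry is $(\Psi_2^{\pm}(\Psi_2^{\pm})^*)_{(x_2,y_2),(x_2,y_2)}=n^2c^2=8/n^6$. This equals $\frac{1}{2n^2}$ if and only if $n^4=16$, i.e.\ $n=2$. So your conclusion survives, with an even simpler equation. Your quartic $(n-2)^3(n+2)$ is an artifact of the misread entries. It vanishes at $n=2$ only because $c-\sqrt{2}/n^2=-c$ there.

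\textbf{Comparison with the paper.} With that repair, your proof takes a genuinely different route. The paper tests the single diagonal entry at $(x_1,y_1)$ for each of $\tau=1,2,3$. This needs the exact coefficients of $\psi_3^{\pm}$ and a fairly involved sum at $\tau=2$, ending in $(n^3-4n+4)^2=n^4$.

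You instead use a rank bound at $\tau=1,3$. $\Psi_1^{\pm}$ has at most $2n$ nonzero rows and $\Psi_3^{\pm}$ has at most $2n$ nonzero columns, so both have rank at most $2n<2n^2$. This is correct and needs only the supports of the states. At $\tau=2$ you test a row far from the initial edge, which sees only the uniform spreading term.

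The paper's approach gives one uniform test at all times. Yours gives shorter computations: only supports, or one constant row, have to be read off.
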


\begin{proof}
The sufficiency part has already been proved by Lemmas~\ref{0520-2} and~\ref{0520-3},
so it remains to prove the necessity.
Let $n \geq 2$.
First, for the initial states $\psi_0^{\pm}$,
the entanglement entropy does not attain its upper bound.
This has essentially already been verified in Section~\ref{0602-3}.
Indeed, it follows from \eqref{0602-2} that
\[ S(\psi_0^{\pm}) = \log 2 \neq \log (2n^2) = \log |\MC{A}(K_{n,n})|. \]

The states at times $\tau \in \{1,2 \}$, namely $\psi_1^{\pm}$ and $\psi_2^{\pm}$, have already been computed in Lemma~\ref{0520-1},
and hence the corresponding amplitude matrices can also be obtained.
The $((x_1, y_1), (x_1, y_1))$-entry of $\Psi_{1}^{\pm} (\Psi_{1}^{\pm})^*$ is
\begin{align*}
(\Psi_{1}^{\pm} (\Psi_{1}^{\pm})^*)_{(x_1, y_1), (x_1, y_1)}
&= \sum_{z \in \MC{A}(K_{n,n})} |\psi_{1}^{\pm}((x_1, y_1), z)|^2 \\
&= \left( \frac{\sqrt{2}}{n^2} \right)^2 \cdot (n-1)
+ \left( \frac{\sqrt{2}}{n^2} - \frac{1}{\sqrt{2}}  \right)^2 \\
&= \frac{n^3 - 4n + 4}{2n^3}.
\end{align*}
However, this value is not equal to $\frac{1}{2n^2}$ for any $n \geq 2$.
Hence Corollary~\ref{0507-2} implies that the entanglement entropy of $\psi_1^{\pm}$ does not attain its upper bound.

Next, the $((x_1, y_1), (x_1, y_1))$-entry of
$\Psi_{2}^{\pm} (\Psi_{2}^{\pm})^*$ is
\begin{align*}
(\Psi_{2}^{\pm} (\Psi_{2}^{\pm})^*)_{(x_1, y_1), (x_1, y_1)}
&= \sum_{z \in \MC{A}(K_{n,n})} |\psi_{2}^{\pm}((x_1, y_1), z)|^2 \\
&= \sum_{i,j=1}^n |\psi_{2}^{\pm}((x_1, y_1), (y_i, x_j))|^2 \\
&= \sum_{i,j = 2}^n |\psi_{2}^{\pm}((x_1, y_1), (y_i, x_j))|^2
+ \sum_{j=2}^n |\psi_{2}^{\pm}((x_1, y_1), (y_1, x_j))|^2 \\
& \hspace{2em} + \sum_{i=2}^n |\psi_{2}^{\pm}((x_1, y_1), (y_i, x_1))|^2
+ |\psi_{2}^{\pm}((x_1, y_1), (y_1, x_1))|^2 \\
&= \left( \frac{2 \sqrt{2}}{n^4} \right)^2 \cdot (n-1)^2 + \left( \frac{2 \sqrt{2}}{n^4} - \frac{\sqrt{2}}{n^2} \right)^2 \cdot (n-1) \\
& \hspace{2em} + \left\{ \frac{\sqrt{2}}{n^2} \left( \frac{2}{n^2} - 1 \right) \right\}^2 \cdot (n-1) + \left\{ \frac{\sqrt{2}}{n^2} \left( \frac{2}{n^2} - 1 \right) - \frac{\sqrt{2}}{n^2} + \frac{1}{\sqrt{2}} \right\}^2 \\
&= \frac{8(n-1)^2}{n^8} + \frac{4(n^2-2)^2 (n-1)}{n^8} + \frac{(n^2-2)^4}{2n^8} \\
&= \frac{(n^3-4n+4)^2}{2n^6}.
\end{align*}
By Corollary~\ref{0507-2},
if the entanglement entropy attains its upper bound,
then $\Psi_2^{\pm}(\Psi_2^{\pm})^* = \frac{1}{2n^2}I$ must hold.
Thus, solving
\[ \frac{(n^3-4n+4)^2}{2n^6} = \frac{1}{2n^2}, \]
in positive integers $n \geq 2$, we have $n=2$.

Continuing, we compute the states $\psi_3^{\pm}$ at time $3$.
Since the two-particle Grover walk on a complete bipartite graph is $4$-periodic,
we have $\psi_3^{\pm} = \mathcal{U}^3 \psi_0^{\pm} = \mathcal{U}^{-1} \psi_0^{\pm}$.
Therefore, it is convenient to use the inverse of the time evolution operator.
Since
\[ \psi_3^{\pm} = \mathcal{U}^{-1} \psi_0^{\pm} = R^{-1} U^* R \psi_0^{\pm} = R^{-1} U^* \frac{1}{\sqrt{2}} \left\{ \BM{e}_{(x_1,y_1), (y_1,x_1)} \pm \BM{e}_{(y_1,x_1), (x_1,y_1)} \right \}, \]
it follows from Lemma~\ref{0513-1} that
\begin{align*}
R \psi_3^{\pm} &= U^* \frac{1}{\sqrt{2}} \left\{ \BM{e}_{(x_1,y_1), (y_1,x_1)} \pm \BM{e}_{(y_1,x_1), (x_1,y_1)} \right \} \\
&= \frac{1}{\sqrt{2}} \left\{ \frac{2}{n^2} \sum_{i,j=1}^n \BM{e}_{(y_i, x_j), (x_1, y_1)} - \BM{e}_{(y_1, x_1), (x_1, y_1)}
\pm \frac{2}{n^2} \sum_{i,j=1}^n \BM{e}_{(x_i,y_j),(y_1,x_1)} \mp \BM{e}_{(x_1, y_1),(y_1,x_1)} \right\}.
\end{align*}
Hence,
\begin{align*}
\psi_3^{\pm} &= \frac{\sqrt{2}}{n^2} \left\{
\sum_{i,j=1}^n \BM{e}_{(y_i, x_1)} \otimes \BM{e}_{(x_j, y_1)}
\pm \sum_{i,j=1}^n \BM{e}_{(x_i,y_1)} \otimes \BM{e}_{(y_j,x_1)}
\right\} \\
& \hspace{8em}
- \frac{1}{\sqrt{2}} \BM{e}_{(y_1, x_1)} \otimes \BM{e}_{(x_1, y_1)}
\mp \frac{1}{\sqrt{2}} \BM{e}_{(x_1, y_1)} \otimes \BM{e}_{(y_1,x_1)}.
\end{align*}
Therefore, the $((x_1, y_1), (x_1, y_1))$-entry of
$\Psi_{3}^{\pm} (\Psi_{3}^{\pm})^*$ is
\begin{align*}
(\Psi_{3}^{\pm} (\Psi_{3}^{\pm})^*)_{(x_1, y_1), (x_1, y_1)}
&= \sum_{z \in \MC{A}(K_{n,n})} |\psi_{3}^{\pm}((x_1, y_1), z)|^2 \\
&= \left( \frac{\sqrt{2}}{n^2} \right)^2 \cdot (n-1)
+ \left( \frac{\sqrt{2}}{n^2} - \frac{1}{\sqrt{2}}  \right)^2 \\
&= \frac{n^3 - 4n + 4}{2n^3}.
\end{align*}
However, this value is not equal to $\frac{1}{2n^2}$ for any $n \geq 2$.
\end{proof}

\section{Summary and further discussion}

In this paper, we defined a two-particle quantum walk on a graph $G$ via the one-particle Grover walk on the Kronecker product $G \otimes G$,
and proved that its time evolution operator $\MC{U}$ commutes with the swap operator $P$ using methods from algebraic graph theory.
Moreover, for the two-particle Grover walk on the complete bipartite graph $K_{n,n}$ starting from the initial states defined in~\eqref{0507-4}, we showed that the entanglement entropy attains its upper bound at some time if and only if $n=1,2$.
Our approach focused on the fact that the Grover walks on complete bipartite graphs are $4$-periodic,
and explicitly computed the states up to time $3$.
Although this approach works well in the case of $K_{n,n}$,
it seems difficult to determine whether the entanglement entropy attains its upper bound by such direct computations for graphs other than $K_{n,n}$, even if the corresponding Grover walk is periodic.
Thus, a natural next problem would be to find an analytic method with some degree of generality,
even if the graph does not induce a periodic Grover walk.
For example, one possible direction is to study whether the attainment of the upper bound of the entropy can be studied by using the eigenvalues of matrices associated with the graph or those of the time evolution operator.
Another possible problem is to obtain useful necessary conditions for the entropy to attain its upper bound.
It is also natural to study how the structure of the graph is related to the entropy.

In the early stage of this study,
our numerical experiments showed that not only $K_{1,1}$ and $K_{2,2}$,
but also the path graph $P_5$ on five vertices exhibits the attainment of the upper bound of the entanglement entropy for the same initial states.
Interestingly, the graphs $K_{1,1}$, $K_{2,2}$, and $P_5$ are also known to exhibit periodicity~\cite{higuchi2017periodicity, kubota2021periodicity} and perfect state transfer~\cite{kubota2022perfect}.
In studies of periodicity and perfect state transfer,
it is known that twice certain eigenvalues of the discriminant, 
which is a normalized adjacency matrix, are algebraic integers~\cite[Figure~1]{kubota2025strongly}.
We are interested in whether an analogous statement also holds for entanglement entropy.

\section*{Acknowledgements}
S.K. is supported by JSPS KAKENHI (Grant Number JP24K16970).
E.S. is supported by JSPS KAKENHI (Grant Number JP24K06863).

\section*{Data availability}
This article has no associated data.

\bibliographystyle{plain}
\bibliography{mybibs}

@article{ahlbrecht2012molecular,
  title={Molecular binding in interacting quantum walks},
  author={Ahlbrecht, A. and Alberti, A. and Meschede, D. and Scholz, V. B. and Werner, A. H. and Werner, R. F.},
  journal={New Journal of Physics},
  volume={14},
  number={7},
  pages={073050},
  year={2012},
  publisher={IOP Publishing}
}

@article{berry2011two,
  title={Two-particle quantum walks: Entanglement and graph isomorphism testing},
  author={Berry, S. D. and Wang, J. B.},
  journal={Physical Review A—Atomic, Molecular, and Optical Physics},
  volume={83},
  number={4},
  pages={042317},
  year={2011},
  publisher={APS}
}

@book{brouwer2011spectra,
  title={Spectra of graphs},
  author={Brouwer, A. E. and Haemers, W. H.},
  year={2011},
  publisher={Springer Science \& Business Media}
}

@article{carson2015entanglement,
  title={Entanglement dynamics of two-particle quantum walks},
  author={Carson, G. R. and Loke, T. and Wang, J. B.},
  journal={Quantum Information Processing},
  volume={14},
  number={9},
  pages={3193--3210},
  year={2015},
  publisher={Springer}
}

@article{chandrashekar2012quantum,
  title={Quantum walk on distinguishable non-interacting many-particles and indistinguishable two-particle},
  author={Chandrashekar, C.M. and Busch, T.},
  journal={Quantum Information Processing},
  volume={11},
  number={5},
  pages={1287--1299},
  year={2012},
  publisher={Springer}
}

@article{cheng2007nullity,
  title={On the nullity of graphs},
  author={Cheng, B. and Liu, B.},
  journal={The Electronic Journal of Linear Algebra},
  volume={16},
  pages={60--67},
  year={2007}
}

@article{gamble2010two,
  title={Two-particle quantum walks applied to the graph isomorphism problem},
  author={Gamble, J. K. and Friesen, M. and Zhou, D. and Joynt, R. and Coppersmith, S. N.},
  journal={arXiv preprint arXiv:1002.3003},
  year={2010}
}

@book{godsil2013algebraic,
  title={Algebraic graph theory},
  author={Godsil, C. and Royle, G.},
  volume={207},
  year={2013},
  publisher={Springer Science \& Business Media}
}

@article{goyal2010spatial,
  title={Spatial entanglement using a quantum walk on a many-body system},
  author={Goyal, S. K. and Chandrashekar, C. M.},
  journal={Journal of Physics A: Mathematical and Theoretical},
  volume={43},
  number={23},
  pages={235303},
  year={2010}
}

@article{higuchi2017periodicity,
  title={Periodicity of the discrete-time quantum walk on a finite graph},
  author={Higuchi, Y. and Konno, N. and Sato, I. and Segawa, E.},
  journal={Interdisciplinary Information Sciences},
  volume={23},
  number={1},
  pages={75--86},
  year={2017},
  publisher={The Editorial Committee of the Interdisciplinary Information Sciences}
}

@article{kubota2021periodicity,
  title={Periodicity of quantum walks defined by mixed paths and mixed cycles},
  author={Kubota, S. and Sekido, H. and Yata, H.},
  journal={Linear Algebra and its Applications},
  volume={630},
  pages={15--38},
  year={2021},
  publisher={Elsevier}
}

@article{kubota2021quantum,
  title={Quantum walks defined by digraphs and generalized Hermitian adjacency matrices},
  author={Kubota, S. and Segawa, E. and Taniguchi, T.},
  journal={Quantum Information Processing},
  volume={20},
  number={3},
  pages={95},
  year={2021},
  publisher={Springer}
}

@article{kubota2022perfect,
  title={Perfect state transfer in Grover walks between states associated to vertices of a graph},
  author={Kubota, S. and Segawa, E.},
  journal={Linear Algebra and its Applications},
  volume={646},
  pages={238--251},
  year={2022},
  publisher={Elsevier}
}

@article{kubota2025circulant,
  title={Circulant graphs with valency up to 4 that admit perfect state transfer in Grover walks},
  author={Kubota, S. and Yoshino, K.},
  journal={Journal of Combinatorial Theory, Series A},
  volume={216},
  pages={106064},
  year={2025},
  publisher={Elsevier}
}

@article{kubota2025strongly,
  title={Strongly regular and strongly walk-regular graphs that admit perfect state transfer},
  author={Kubota, S. and Sekido, H. and Yata, H. and Yoshino, K.},
  journal={arXiv preprint arXiv:2506.02530},
  year={2025}
}

@article{lovett2010universal,
  title={Universal quantum computation using the discrete-time quantum walk},
  author={Lovett, N. B. and Cooper, S. and Everitt, M. and Trevers, M. and Kendon, V.},
  journal={Physical Review A—Atomic, Molecular, and Optical Physics},
  volume={81},
  number={4},
  pages={042330},
  year={2010},
  publisher={APS}
}

@book{nielsen2010quantum,
  title={Quantum computation and quantum information},
  author={Nielsen, M. A. and Chuang, I. L.},
  year={2010},
  publisher={Cambridge university press}
}

@article{nishioka2018entanglement,
  title={Entanglement entropy: holography and renormalization group},
  author={Nishioka, T.},
  journal={Reviews of Modern Physics},
  volume={90},
  number={3},
  pages={035007},
  year={2018},
  publisher={APS}
}

@article{omar2006quantum,
  title={Quantum walk on a line with two entangled particles},
  author={Omar, Y. and Paunkovi{\'c}, N. and Sheridan, L. and Bose, S.},
  journal={Physical Review A—Atomic, Molecular, and Optical Physics},
  volume={74},
  number={4},
  pages={042304},
  year={2006},
  publisher={APS}
}

@article{paryzkova2024two,
  title={Two-particle Hadamard walk on dynamically percolated line and circle},
  author={Par{\`y}zkov{\'a}, M. and {\v{S}}tefa{\v{n}}{\'a}k, M. and Novotn{\`y}, J. and Koll{\'a}r, B. and Kiss, T.},
  journal={Physica Scripta},
  volume={99},
  number={3},
  pages={035112},
  year={2024},
  publisher={IOP Publishing}
}

@book{portugal2013quantum,
  title={Quantum walks and search algorithms},
  author={Portugal, R.},
  volume={19},
  year={2013},
  publisher={Springer}
}

@article{qiang2024quantum,
  title={Quantum walk computing: Theory, implementation, and application},
  author={Qiang, X. and Ma, S. and Song, H.},
  journal={Intelligent Computing},
  volume={3},
  pages={0097},
  year={2024},
  publisher={AAAS}
}

@article{rudinger2013comparing,
  title={Comparing algorithms for graph isomorphism using discrete-and continuous-time quantum random walks},
  author={Rudinger, K. and Gamble, J. K. and Bach, E. and Friesen, M. and Joynt, R. and Coppersmith, S. N.},
  journal={Journal of Computational and Theoretical Nanoscience},
  volume={10},
  number={7},
  pages={1653--1661},
  year={2013},
  publisher={American Scientific Publishers}
}

@book{sakurai2020modern,
  title={Modern quantum mechanics},
  author={Sakurai, J. J. and Napolitano, J.},
  year={2020},
  publisher={Cambridge university press}
}

\end{document}